\DeclareSymbolFont{rsfscript}{OMS}{rsfs}{m}{n}
\DeclareSymbolFontAlphabet{\mathrsfs}{rsfscript}
\DeclareMathOperator{\Syn}{Syn}
\begin{document}
\title{Complexity of checking whether two automata are synchronized by the same language}
\author{Marina Maslennikova\thanks{The author acknowledges support from the Presidential Programm for young researchers, grant MK-3160.2014.1.}}
\titlerunning{Complexity of checking equality of languages of reset words}

\institute{Ural Federal University, Ekaterinburg, Russia\\
\email{maslennikova.marina@gmail.com}}

\maketitle

\begin{abstract}
A deterministic finite automaton $\mathscr{A}$ is said to be \emph{synchronizing} if it has a \emph{reset} word, i.e. a word that brings all states of the automaton $\mathscr{A}$ to a particular one. We prove that it is a \textbf{PSPACE}-complete problem to check whether the language of reset words for a given automaton coincides with the language of reset words for some particular automaton.\\
\textbf{Keywords:} ideal language, synchronizing automaton, reset word, reset complexity, \textbf{PSPACE}-completeness.
\end{abstract}

\section*{Introduction}
Let $\mathscr{A}=\langle Q,\Sigma,\delta\rangle$ be a \textit{deterministic finite automaton} (DFA),
where $Q$ is the \textit{state set}, $\Sigma$ stands for the \textit{input alphabet},
and $\delta: Q\times\Sigma\rightarrow Q$ is the totally defined \textit{transition function} defining
the action of the letters in $\Sigma$ on $Q$. The function $\delta$ is extended uniquely to a function $Q\times \Sigma^*\rightarrow Q$, where $\Sigma^*$ stands for the free monoid over $\Sigma.$ The latter function is still denoted by $\delta.$
In the theory of formal languages the definition of a DFA usually includes the \textit{initial state} $q_0\in Q$ and the set $F\subseteq Q$ of \textit{terminal states}. We will use this definition when dealing with automata as devices for recognizing languages. A language $L\subseteq \Sigma^*$ is \emph{recognized} (or \emph{accepted}) by an automaton $\mathscr{A}=\langle Q, \Sigma, \delta, q_0, F\rangle$ if $L=\{w\in \Sigma^*\mid \delta(q_0,w)\in F\}.$ We denote by $L[\mathscr{A}]$ the language accepted by the automaton $\mathscr{A}.$

A DFA $\mathscr{A}=\langle Q,\Sigma,\delta\rangle$ is called
\emph{synchronizing} if there exists a word $w  \in \Sigma^{*}$ whose action leaves
the automaton in one particular state no matter at which state in $Q$ it is applied: $\delta(q,w)=\delta(q',w)$
for all $q, q' \in Q$. Any word $w$ with this property is said to be
\emph{reset} for the DFA $\mathscr{A}$.
For the last 50 years synchronizing automata received a great deal of attention. In 1964
\v{C}ern\'{y}
conjectured that
every synchronizing automaton with $n$ states possesses a reset word of length at most $(n-1)^2$.
Despite intensive efforts of researchers this conjecture still remains open.
For a brief introduction to the theory of synchronizing automata we refer the reader
to the recent surveys~\cite{Vo_Survey, Sa05}.

In the present paper we focus on some complexity aspects of the theory of synchronizing automata.
We denote by $\Syn(\mathrsfs{A})$ the language of reset words for a given automaton $\mathscr{A}$.
It is well known that $\Syn(\mathrsfs{A})$ is regular~\cite{Vo_Survey}.
Furthermore, it is an \emph{ideal}
in $\Sigma^*$, i.e. $\Syn(\mathrsfs{A})=\Sigma^{*}\Syn(\mathrsfs{A})\Sigma^{*}$.
On the other hand, every regular ideal language $L$ serves as the language of reset words for some automaton.
For instance, the minimal automaton
recognizing $L$ is synchronized exactly by $L$~\cite{SOFSEM}. Thus synchronizing automata can be considered as a special representation of an ideal language.
Effectiveness of such a representation was addressed in~\cite{SOFSEM}.
The \textit{reset complexity} $rc(L)$ of an ideal language $L$ is the minimal possible number
of states in a synchronizing automaton $\mathscr{A}$ such that $\Syn(\mathscr{A})=L$.
Every such automaton $\mathscr{A}$ is called a \textit{minimal synchronizing automaton} (for brevity, MSA).
Let $sc(L)$ be the number of states in the minimal automaton recognizing $L$.
For every ideal language $L$ we have  $rc(L) \leq sc(L)$~\cite{SOFSEM}.
Moreover, there are languages $L_n$ for every $n \geq 3$ such that $rc(L_n) = n$ and $sc(L_n) = 2^n - n$~\cite{SOFSEM}.
Thus the representation of an ideal language by means of a synchronizing automaton can be exponentially more succinct than the
``traditional'' representation via the minimal automaton.
However, no reasonable algorithm is known for computing an MSA of a given language.
One of the obstacles is that an MSA is not uniquely defined.
For instance, there is a language with at least two different
MSAs \cite{SOFSEM}.

Let $L$ be an ideal regular language over $\Sigma$ with $rc(L)=n$. The latter equality means that there exists some $n$-state DFA $\mathscr{B}$ such that $\Syn(\mathscr{B})=L$, and $\mathscr{B}$ is an MSA for $L.$ Now it is quite natural to ask the following question: how hard is it to verify the condition $\Syn(\mathscr{B})=L$? It is well known that the equality of the languages accepted by two given DFAs can be checked in polynomial of the size of automata time. However, the problem of checking the equality of the languages of reset words of two synchronizing DFAs turns out to be hard. Moreover, it is hard to check whether one particular ideal language serves as the language of reset words for a given synchronizing automaton. We state formally the SYN-EQUALITY problem:

--\emph{Input:} synchronizing automata $\mathscr{A}$ and $\mathscr{B}$.

--\emph{Question:} is $\Syn(\mathscr{A})=\Syn(\mathscr{B})$?

We prove that SYN-EQUALITY is a \textbf{PSPACE}-complete problem. Actually, we prove a stronger result, that it is a \textbf{PSPACE}-complete problem to check whether the language $\Syn(\mathscr{A})$ for a given automaton $\mathscr{A}$ coincides with the language $\Syn(\mathscr{B})$ for some particular automaton $\mathscr{B}.$
Also it is interesting to understand how hard is it to verify a strict inclusion $\Syn(\mathscr{A})\subsetneq\Syn(\mathscr{B}).$ We prove that it is not easier than to check the precise equality of the languages $\Syn(\mathscr{A})$ and $\Syn(\mathscr{B}).$ So the problem of constructing an MSA for a given ideal language is unlikely to be an easy task. Also we obtain that the problem of checking the inequality $rc(L)\leq \ell$, for a given positive integer number $\ell$, is \textbf{PSPACE}-complete. Here an ideal language $L$ is presented by a DFA, for which $L$ serves as the language of reset words. Actually, we prove that the problem of checking the equalities $rc(L)=1$ or $rc(L)$ is trivial, however it is a \textbf{PSPACE}-complete problem to verify whether $rc(L)=3.$

The paper is organized as follows. In Section 1 we introduce some definitions and state formally the considered problems. In Section 2 we prove main results about \textbf{PSPACE}-completeness of the problem SYN-EQUALITY and \textbf{PSPACE}-completeness of the problem of checking whether the reset complexity of a given ideal language is not greater than $\ell$.

\section{Preliminaries}

A standard tool for finding the language of synchronizing words of a given DFA
$\mathrsfs{A}=\langle Q,\delta,\Sigma\rangle$ is the \emph{power automaton}
$\mathcal{P}(\mathrsfs{A})$. Its state set is the set $\mathcal{Q}$ of all nonempty subsets of $Q$, and the transition function is defined as a natural extension of $\delta$ on the set $\mathcal{Q}\times\Sigma$ (the resulting function is also denoted by $\delta$), namely, $\delta(S,a)=\{\delta(s,a)\mid s\in S\}$ for $S\subseteq Q$ and $a\in\Sigma$.
If we take  the set $Q$ as the initial state and singletons as final states in $\mathcal{P}(\mathscr{A})$, then we obtain an automaton recognizing $\Syn(\mathscr{A}).$
It is easy to see that if all the singletons identified to get a unique sink state $s$ (i.e. $s$ is fixed by all letters in $\Sigma$), the resulting automaton still recognizes $\Syn(\mathrsfs{A})$. Throughout the paper the term \emph{power automaton} and the notation $\mathcal{P}(\mathscr{A})$ will refer to this modified version.

One may notice now that the problem SYN-EQUALITY can be solved by the following naive algorithm. Indeed, we construct the power automata $\mathcal{P}(\mathscr{A})$ and $\mathcal{P}(\mathscr{B})$ for DFAs $\mathscr{A}$ and $\mathscr{B}$. Now it remains to verify that automata $\mathcal{P}(\mathscr{B})$ and $\mathcal{P}(\mathscr{A})$ accept the same language. However, the automaton $\mathcal{P}(\mathscr{A})$ has $2^{n}-n$ states, where $n$ is the number of states in the DFA $\mathscr{A}$. So we cannot afford to construct directly the corresponding power automata. Now we state formally the SYN-INCLUSION problem. It will be shown that SYN-INCLUSION is in \textbf{PSPACE}.

SYN-INCLUSION

--\emph{Input:} synchronizing automata $\mathscr{A}$ and $\mathscr{B}.$

--\emph{Question:} is $\Syn(\mathscr{A})\subseteq\Syn(\mathscr{B})$?

Since SYN-INCLUSION belongs to the class \textbf{PSPACE}, we obtain that SYN-EQUALITY is in \textbf{PSPACE} as well. Further we prove that the SYN-EQUALITY problem is complete for the class \textbf{PSPACE}. Now it is interesting to consider the SYN-STRICT-INCLUSION problem:

--\emph{Input:} synchronizing automata $\mathscr{A}$ and $\mathscr{B}.$

--\emph{Question:} is $\Syn(\mathscr{A})\subsetneq\Syn(\mathscr{B})$?

It will be shown that SYN-STRICT-INCLUSION is a \textbf{PSPACE}-complete problem.

Recall that the word $u\in \Sigma^*$ is a \emph{prefix} (\emph{suffix} or \emph{factor}, respectively) of the word $w$ if $w=us$ ($w=tu$ or $w=tus$, respectively) for some $t,s\in\Sigma^*$. A reset word $w$ for a DFA $\mathscr{A}$ is called \emph{minimal} if none of its proper prefixes nor suffixes is reset.
We will denote by $w[i]$ the $i^{th}$ letter of $w$ and by $|w|$ the length of the word $w$.
In what follows the word $w[i]w[i+1]...w[j]$, for $i<j$, will be denoted by $w[i..j]$.

\section{\textbf{PSPACE}-completeness}

\begin{theorem}\label{PSPACE}
SYN-INCLUSION is in \textbf{PSPACE}.
\end{theorem}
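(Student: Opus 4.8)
The plan is to show that the \emph{complement} of SYN-INCLUSION lies in \textbf{NPSPACE}, and then to invoke Savitch's theorem ($\textbf{NPSPACE}=\textbf{PSPACE}$) together with the closure of \textbf{PSPACE} under complementation. Write the input as $\mathscr{A}=\langle Q_1,\Sigma,\delta_1\rangle$ and $\mathscr{B}=\langle Q_2,\Sigma,\delta_2\rangle$. The first step is the observation that $\Syn(\mathscr{A})\not\subseteq\Syn(\mathscr{B})$ holds precisely when there is a word $w\in\Sigma^*$ that is reset for $\mathscr{A}$ but not reset for $\mathscr{B}$; in terms of the power automata this says that reading $w$ from the initial state $Q_1$ of $\mathcal{P}(\mathscr{A})$ yields a singleton subset of $Q_1$, while reading $w$ from the initial state $Q_2$ of $\mathcal{P}(\mathscr{B})$ yields a subset of $Q_2$ of size at least $2$.

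The key point is that, although $\mathcal{P}(\mathscr{A})$ and $\mathcal{P}(\mathscr{B})$ each have exponentially many states, a single state of $\mathcal{P}(\mathscr{A})$ is merely a subset of $Q_1$ and can therefore be stored in $|Q_1|$ bits, and similarly for $\mathcal{P}(\mathscr{B})$. Hence I would run the following nondeterministic search through the product automaton $\mathcal{P}(\mathscr{A})\times\mathcal{P}(\mathscr{B})$: maintain a pair $(S,T)$ with $S\subseteq Q_1$ and $T\subseteq Q_2$, initialised to $(Q_1,Q_2)$; then repeatedly guess a letter $a\in\Sigma$ and replace $(S,T)$ by $\bigl(\delta_1(S,a),\delta_2(T,a)\bigr)$, each update being computed directly from the transition tables in polynomial time and space; and accept as soon as $|S|=1$ and $|T|\geq 2$. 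Only polynomial space is ever used. To force the procedure to halt on negative instances I would attach a binary step counter of $|Q_1|+|Q_2|$ bits and reject upon overflow, which is sound because a ``bad'' pair, if reachable at all, is reachable within $2^{|Q_1|}\cdot 2^{|Q_2|}$ steps (the number of states of the product automaton). The procedure accepts exactly the instances with $\Syn(\mathscr{A})\not\subseteq\Syn(\mathscr{B})$, so the complement of SYN-INCLUSION is in \textbf{NPSPACE}, and the theorem follows.

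I do not expect a serious obstacle: the whole point is simply that subsets of a state set are succinct objects that can be updated on the fly, so a reachability search in the exponential-size product of the two power automata can nevertheless be carried out in polynomial space. The only items needing a line of care are the routine bound on the search length used to guarantee halting, and the fact that the test ``$|S|=1$'' really does capture ``the word read so far is reset for $\mathscr{A}$'' --- equivalently, in the modified power automaton of the preliminaries, ``the run has reached the sink $s$'', since that sink is obtained by merging all singletons and is fixed by every letter. As an alternative to the nondeterministic formulation, one may instead solve directed $s$--$t$ reachability to a bad state in the product graph directly by Savitch's $O(\log^2 N)$-space algorithm with $N=2^{|Q_1|}\cdot 2^{|Q_2|}$, which again lies within polynomial space.
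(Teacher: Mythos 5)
Your proof is correct and follows essentially the same route as the paper: both guess a witness word letter by letter while maintaining only the current images of $Q_1$ and $Q_2$ in polynomial space, accepting when the word is reset for $\mathscr{A}$ but not for $\mathscr{B}$, and then appeal to Savitch's theorem. You are in fact slightly more careful than the paper in making explicit that this places the \emph{complement} of SYN-INCLUSION in \textbf{NPSPACE} (and in bounding the search length to guarantee termination), whereas the paper glosses over the complementation step; but the underlying argument is identical.
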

\begin{proof}
Savitch's theorem states that \textbf{PSPACE=NPSPACE} \cite{Savitch}. Therefore, it is enough to prove that SYN-INCLUSION belongs to \textbf{NPSPACE}, i.e. it suffices to solve the problem by a non-deterministic algorithm within polynomial space. Let $\mathscr{A}=\langle Q_1,\Sigma,\delta_1\rangle$ and $\mathscr{B}=\langle Q_2,\Sigma,\delta_2\rangle$ be synchronizing automata over $\Sigma$. We have to prove that the language $\Syn(\mathscr{B})$ contains the language $\Syn(\mathscr{A}),$ or equivalently the following equality takes place: $\Syn(\mathscr{A})\cap \Syn(\mathscr{B})^c=\emptyset,$ where $\Syn(\mathscr{B})^c$ is the complement language of $\Syn(\mathscr{B}).$ An obstacle is that we cannot afford to construct the automaton recognizing the language $\Syn(\mathscr{A})\cap \Syn(\mathscr{B})^c$ directly. Instead we provide an algorithm that guesses a word $w$ which is reset for $\mathscr{A}$ and is not reset for $\mathscr{B}.$ Let us notice that $w$ may turn out to be exponentially long in $|Q_1|+|Q_2|$. Hence even if our algorithm correctly guesses $w,$ it would not have enough space to store its guess. Thus the algorithm should guess $w$ letter by letter.

Let $Q_1=\{q_1,\ldots,q_n\}$ and $Q_2=\{p_1,\ldots,p_m\}.$ The algorithm guesses the first letter $w[1]$ of $w,$ applies $w[1]$ at every state in $Q_1$ and $Q_2$ and stores two lists of images, namely, $\{\delta_1(q_1,w[1]),$ $\ldots,$ $\delta_1(q_n,w[1])\}$ and $\{\delta_2(p_1,w[1]),$ $\ldots,$ $\delta_2(p_m,w[1])\}.$ These lists clearly require only $O(n+m)$ space. Further the algorithm guesses the second letter $w[2]$ and updates the lists of images re-using the space, and so on. Note that $\delta_1(q_i,w[1..k])=\delta_1(\delta_1(q_i,w[1..k-1]),w[k]),$ where $k\geq 2$ and $i\in\{1,\ldots,n\}.$ So we do not need to store the whole word $w$ in order to build the sets $\delta_1(Q_1,w)$ and $\delta_2(Q_2,w).$ At the end of the guessing steps the algorithm gets two lists $\{\delta_1(q_1,w),$ $\dots,$ $\delta_1(q_n,w)\}$ and $\{\delta_1(p_1,w),$ $\dots,$ $\delta_1(p_m,w)\}.$ It remains to check the following conditions:

\noindent-- all the states in the first list coincide with some particular state from $Q_1;$

\noindent-- there are at least two different states in the second list.

The latter checking does not require any additional space. Thus the problem SYN-INCLUSION is in \textbf{PSPACE.}
\qed
\end{proof}

Since $\Syn(\mathscr{A})=\Syn(\mathscr{B})$ if and only if $\Syn(\mathscr{A})\subseteq \Syn(\mathscr{B})$ and  $\Syn(\mathscr{B})\subseteq \Syn(\mathscr{A}),$ we obtain the following corollary.
\begin{corollary}\label{SYN-EQUALITY}
SYN-EQUALITY is in \textbf{PSPACE}.
\end{corollary}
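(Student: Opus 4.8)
The plan is to observe that SYN-EQUALITY is nothing more than two instances of SYN-INCLUSION run in tandem, and then to invoke Theorem~\ref{PSPACE} together with the standard closure properties of \textbf{PSPACE}. First I would record the elementary set-theoretic fact that for any two languages $\Syn(\mathscr{A})=\Syn(\mathscr{B})$ holds if and only if both $\Syn(\mathscr{A})\subseteq\Syn(\mathscr{B})$ and $\Syn(\mathscr{B})\subseteq\Syn(\mathscr{A})$ hold (this is exactly the equivalence already stated just before the corollary). Thus an algorithm for SYN-EQUALITY is obtained by calling the SYN-INCLUSION procedure twice, once on the ordered pair $(\mathscr{A},\mathscr{B})$ and once on $(\mathscr{B},\mathscr{A})$, and accepting precisely when both calls accept.

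Next I would argue the space bound. By Theorem~\ref{PSPACE}, SYN-INCLUSION lies in \textbf{PSPACE}; concretely one runs the first inclusion test to completion, erases its work tape, then runs the second test reusing the same polynomial amount of space, so the total space is the maximum of the two, still polynomial in $|Q_1|+|Q_2|$. Since \textbf{PSPACE} is closed under complement and under intersection, deciding the conjunction of the two inclusions is again in \textbf{PSPACE}, which is the claim. Alternatively, and matching the style of the proof of Theorem~\ref{PSPACE}, one may describe a nondeterministic polynomial-space procedure for the complement of SYN-EQUALITY that guesses, letter by letter, a word witnessing the failure of one of the two inclusions; this puts the complement in \textbf{NPSPACE}, and Savitch's theorem together with \textbf{PSPACE}=\textbf{co-PSPACE} finishes the argument.

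There is essentially no hard step here: all of the content is in Theorem~\ref{PSPACE}, and the corollary is a routine closure argument. The only point deserving a word of care is that the routine underlying Theorem~\ref{PSPACE} is formulated nondeterministically — it guesses a separating word for the non-inclusion — so one should phrase the reduction through \textbf{NPSPACE} and \textbf{co-NPSPACE} rather than naively ``calling a deterministic PSPACE subroutine twice''; invoking Savitch's theorem once at the outset, exactly as in the proof of Theorem~\ref{PSPACE}, removes this wrinkle.
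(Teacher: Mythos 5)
Your proposal is correct and follows essentially the same route as the paper: the paper derives the corollary from the single observation that $\Syn(\mathscr{A})=\Syn(\mathscr{B})$ if and only if both inclusions $\Syn(\mathscr{A})\subseteq\Syn(\mathscr{B})$ and $\Syn(\mathscr{B})\subseteq\Syn(\mathscr{A})$ hold, together with Theorem~\ref{PSPACE} and closure of \textbf{PSPACE} under intersection. Your additional remarks about reusing space between the two calls and routing the argument through \textbf{NPSPACE} and Savitch's theorem are sound but not needed beyond what the paper already states.
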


To prove that SYN-EQUALITY is a \textbf{PSPACE}-complete problem we reduce the following well-known \textbf{PSPACE}-complete problem to the complement of SYN-EQUALITY. This problem deals with checking emptiness of the intersection of languages accepted by DFAs from a given collection \cite{Kozen}.

FINITE AUTOMATA INTERSECTION

--\emph{Input:} given $n$ DFAs $M_i=\langle Q_i,\Sigma,\delta_i,q_i,F_i\rangle,$ for $i=1,\ldots,n.$

--\emph{Question:} is $\bigcap_i L[M_i]\neq\emptyset?$

Given an instance of FINITE AUTOMATA INTERSECTION, we can suppose without loss of generality that each initial state $q_i$ has no incoming edges and $q_i\not\in F_i.$ Indeed, excluding the case for which the empty word $\varepsilon$ is in $L[M_i]$ we can always build a DFA $M'_i=\langle Q'_i,\Sigma,\delta'_i,q'_i,F_i\rangle$, which recognizes the same language of $M_i$, such that the initial state $q'_i$ has no incoming edges. This can be easily achieved by adding a new initial state $q'_i$ to the state set $Q_i$ and defining the transition function $\delta'_i$ by the rule: $\delta'_i(q'_i,a)=\delta_i(q_i,a)$ for all $a\in\Sigma$ and $\delta'_i(q,a)=\delta_i(q,a)$ for all $a\in\Sigma,$ $q\in Q_i.$ Furthermore, we may assume that the sets $Q_i$, for $i=1,\ldots,n,$ are pairwise disjoint.

To build an instance of SYN-EQUALITY from DFAs $M_i$, $i=1,\ldots,n,$ we construct a DFA $\mathscr{A}=\langle Q,\Delta,\varphi,\rangle$ with $Q=\bigcup^n_{i=1}Q_i \cup \{s,h\}$, where $s$ and $h$ are new states not belonging to any $Q_i.$
We add three new letters to the alphabet $\Sigma$ and get in this way $\Delta=\Sigma \cup \{x,y,z\}$.
The transition function $\varphi$ of the DFA $\mathscr{A}$ is defined by the following rules:\\
\begin{align*}
\varphi(q,a)&=\delta_i(q,a) &\text{for all }&i=1,\ldots,n,\text{for all }q\in Q_i \text{ and }a\in\Sigma;\\
\varphi(q,x)&=q_i &\text{for all }&i=1,\ldots,n,\text{for all }q\in Q_i;\\
\varphi(q,z)&=s &\text{for all }&i=1,\ldots,n,\text{for all }q\in F_i;\\
\varphi(q,z)&=h &\text{for all }&i=1,\ldots,n,\text{for all }q\in (Q_i\setminus F_i);\\
\varphi(q,y)&=s &\text{for all }&i=1,\ldots,n,\text{for all }q\in Q_i;\\
\varphi(h,a)&=s &\text{for all }&a\in\Delta;\\
\varphi(s,a)&=s &\text{for all }&a\in\Delta.
\end{align*}

The resulting automaton $\mathscr{A}$ is shown schematically in the Fig. 1. The action of letters from $\Sigma$ on the states $p\in Q_i$ is not shown. Denote by $G_i$ the set $Q_i\setminus(F_i\cup\{q_i\}).$ All the states from the set $G_i$ are shown as the node labeled by $G_i.$ All the states from the set $F_i$ are shown as the node labeled by $F_i.$

\begin{figure}[th]
\begin{center}
\unitlength=4pt
\begin{picture}(40,25)
    \gasset{Nw=4,Nh=4,Nmr=2}
    \thinlines
\node(h)(0,0){$h$}
\node(s)(40,0){$s$}
\node[Nframe=n](A3)(30,20){$\ldots$}
\node[Nframe=n](A2)(10,20){$\ldots$}
\node(q_1)(0,20){$q_i$}
\gasset{Nadjust=w,Nadjustdist=1.5,Nh=6,Nmr=2}
\node(p_1)(20,20){$G_i$}
\node(p_2)(40,20){$F_i$}
\drawloop[loopangle=0,loopdiam=4](s){$\Delta$}
\drawloop[loopangle=90,loopdiam=4](q_1){$x$}
\drawedge(q_1,h){$z$}
\drawedge(q_1,s){$y$}
\drawedge(h,s){$\Delta$}
\drawedge(p_1,h){$z$}
\drawedge(p_2,s){$y,z$}
\drawedge[ELside=r,curvedepth=-3](p_1,q_1){$x$}
\drawedge[ELside=r,curvedepth=-7](p_2,q_1){$x$}
\drawedge(p_1,s){$y$}
\drawedge(p_2,s){$y$}
\end{picture}
\caption{Automaton $\mathscr{A}$}
\end{center}
\label{FigA}
\end{figure}

It can be easily seen that by the definition of the transition function $\varphi$ we get $\varphi(Q,w)\cap Q_i\neq\emptyset$ if and only if $w\in(\Sigma\cup\{x\})^*.$
From this observation and the definition of $\varphi$ we obtain the following lemma.

\begin{lemma}\label{PhiA}
For any $w\in\Delta^*$ we have $\varphi(Q,w)\cap Q_i\neq\emptyset$ for all $i=1,\ldots,n$ if and only if there is some $j\in\{1,\ldots,n\}$ such that $\varphi(Q,w)\cap Q_j\neq\emptyset.$
\end{lemma}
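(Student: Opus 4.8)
The \emph{only if} direction is trivial: the index set $\{1,\dots,n\}$ is non-empty, so "$\varphi(Q,w)\cap Q_i\neq\emptyset$ for all $i$" immediately yields some $j$ with $\varphi(Q,w)\cap Q_j\neq\emptyset$. For the \emph{if} direction the plan is to reduce everything to the observation stated just before the lemma, namely that for each \emph{fixed} index $i$ one has $\varphi(Q,w)\cap Q_i\neq\emptyset$ if and only if $w\in(\Sigma\cup\{x\})^*$. Granting this, if $\varphi(Q,w)\cap Q_j\neq\emptyset$ for some $j$, then $w\in(\Sigma\cup\{x\})^*$, and applying the observation once more to every $i\in\{1,\dots,n\}$ gives $\varphi(Q,w)\cap Q_i\neq\emptyset$ for all $i$. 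So both sides of the equivalence are in fact equivalent to the single condition $w\in(\Sigma\cup\{x\})^*$.

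Thus the actual work is to verify that observation, which I would do by a short induction on $|w|$ together with a trap argument. For the implication $w\in(\Sigma\cup\{x\})^*\Rightarrow\varphi(Q,w)\cap Q_i\neq\emptyset$, induct on $|w|$: for $w=\varepsilon$ we have $\varphi(Q,\varepsilon)=Q\supseteq Q_i\ni q_i$; for the step, write $w=w'a$ with $a\in\Sigma\cup\{x\}$, pick $q\in\varphi(Q,w')\cap Q_i$, and note from the definition of $\varphi$ that $\varphi(q,a)=\delta_i(q,a)\in Q_i$ when $a\in\Sigma$, while $\varphi(q,x)=q_i\in Q_i$; in either case $\varphi(Q,w)$ meets $Q_i$. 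The content here is simply that each $Q_i$ is invariant under the letters of $\Sigma$ and is mapped into itself by $x$.

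For the converse I would argue by contraposition: if $w\notin(\Sigma\cup\{x\})^*$, then $w$ contains a letter from $\{y,z\}$, so write $w=u\,a\,v$ with $a\in\{y,z\}$. From the displayed definition of $\varphi$, both $y$ and $z$ send every state of $Q$ into the set $\{h,s\}$, and $\{h,s\}$ is forward-invariant — indeed $\varphi(\{h,s\},b)=\{s\}$ for every letter $b$ — so $\varphi(Q,w)=\varphi(\varphi(Q,ua),v)\subseteq\varphi(\{h,s\},v)\subseteq\{h,s\}$. Since $\{h,s\}$ is disjoint from every $Q_i$, we conclude $\varphi(Q,w)\cap Q_i=\emptyset$ for all $i$.

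The only slightly delicate point is the case analysis on the letters $y,z$ and on the states $h,s$: one must check that once a $y$ or a $z$ has been read the trajectory is confined to the sink region $\{h,s\}$ and can never re-enter any $Q_i$. This is immediate from the definition of $\varphi$ (the rules $\varphi(q,y)=s$ and $\varphi(q,z)\in\{h,s\}$ for $q\in\bigcup_i Q_i$, together with $\varphi(h,b)=\varphi(s,b)=s$ for all $b$), so in truth there is no real obstacle — the lemma is essentially bookkeeping once the invariance of the sets $Q_i$ under $\Sigma\cup\{x\}$ and the trap property of $\{h,s\}$ have been recorded.
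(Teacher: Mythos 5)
Your proof is correct and follows the same route as the paper: the paper derives the lemma directly from the observation stated just before it (that $\varphi(Q,w)\cap Q_i\neq\emptyset$ iff $w\in(\Sigma\cup\{x\})^*$), and you reduce both sides of the equivalence to that same condition. The only difference is that you also spell out the verification of the observation (invariance of each $Q_i$ under $\Sigma\cup\{x\}$ and the trap property of $\{h,s\}$), which the paper leaves as ``easily seen.''
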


Consider the languages $L_1$ and $L_2$:\\
\begin{align*}
L_1&=(\Sigma\cup\{x\})^*y\Delta^*;\\
L_2&=(\Sigma\cup\{x\})^*z\Delta^+.
\end{align*}
Consider the language $I=L_1\cup L_2.$

\begin{lemma}\label{ResetA}
$\bigcap^n_{i=1}L[M_i]=\emptyset$ if and only if $\Syn(\mathscr{A})=I.$
\end{lemma}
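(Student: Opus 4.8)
The plan is to establish the two inclusions $I\subseteq\Syn(\mathscr{A})$ and $\Syn(\mathscr{A})\subseteq I$, the second one only under the hypothesis $\bigcap_i L[M_i]=\emptyset$, and to treat the remaining direction by contraposition. The organizing observation is that $s$ is a sink state of $\mathscr{A}$ (it is fixed by every letter, and so is contained in $\varphi(Q,w)$ for every $w$), while $h$ is not a sink; hence the synchronizing state, if it exists, must be $s$, i.e. $w\in\Syn(\mathscr{A})$ if and only if $\varphi(Q,w)=\{s\}$.

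First I would prove $I\subseteq\Syn(\mathscr{A})$, which holds no matter what the automata $M_i$ are. Given a word $uyv$ with $u\in(\Sigma\cup\{x\})^*$ and $v\in\Delta^*$: since $\varphi(q,y)=s$ for \emph{every} $q\in Q$, we get $\varphi(Q,uy)=\{s\}$ and hence $\varphi(Q,uyv)=\{s\}$, so $L_1\subseteq\Syn(\mathscr{A})$; taking $u=v=\varepsilon$ shows in particular that $\mathscr{A}$ is synchronizing. Given a word $uzv$ with $u\in(\Sigma\cup\{x\})^*$ and $v\in\Delta^+$: the letter $z$ sends any subset of $Q$ into $\{s,h\}$ (it maps each $F_i$ to $s$, each $Q_i\setminus F_i$ and $h$ to $s$ or $h$, and $s$ to $s$), and the first letter of $v$ then collapses $\{s,h\}$ to $\{s\}$, so $L_2\subseteq\Syn(\mathscr{A})$. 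Thus $I=L_1\cup L_2\subseteq\Syn(\mathscr{A})$, and therefore $\Syn(\mathscr{A})=I$ is equivalent to $\Syn(\mathscr{A})\subseteq I$.

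Next I would unwind the complement of $I$: a word fails to lie in $I$ exactly when it contains no letter from $\{y,z\}$, or its first letter from $\{y,z\}$ is a $z$ that is moreover the last letter of the word; equivalently $\Delta^*\setminus I=(\Sigma\cup\{x\})^*\cup(\Sigma\cup\{x\})^*z$. Assume now $\bigcap_i L[M_i]=\emptyset$ and take $w\in\Syn(\mathscr{A})$; I must show $w\in I$. If $w\in(\Sigma\cup\{x\})^*$, then by the observation preceding Lemma~\ref{PhiA} we have $\varphi(Q,w)\cap Q_i\neq\emptyset$, contradicting $\varphi(Q,w)=\{s\}$. So suppose $w=uz$ with $u\in(\Sigma\cup\{x\})^*$. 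The key computation is that every letter of $\Sigma$ acts on $Q_i$ as $\delta_i$ and $x$ sends all of $Q_i$ to $q_i$; hence $\varphi(Q_i,u)=\delta_i(Q_i,u)$ if $u$ contains no $x$, and $\varphi(Q_i,u)=\{\delta_i(q_i,u_2)\}$ if $u$ contains $x$, where $u_2$ denotes the suffix of $u$ after its last $x$. Setting $u_2:=u$ in the first case, in both cases $\delta_i(q_i,u_2)\in\varphi(Q_i,u)$. Since $w=uz$ is reset, $\varphi(Q,uz)=\{s\}$, which forces $\varphi(Q_i,u)\subseteq F_i$ for every $i$ — otherwise $z$ would carry some state to $h$ — and in particular $\delta_i(q_i,u_2)\in F_i$, i.e. $u_2\in L[M_i]$, for all $i$; the degenerate cases $u=\varepsilon$ or $u_2=\varepsilon$ are ruled out the same way using $q_i\notin F_i$. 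This contradicts $\bigcap_i L[M_i]=\emptyset$, so $w\in I$, whence $\Syn(\mathscr{A})=I$.

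For the converse direction I would argue the contrapositive: if $\bigcap_i L[M_i]\neq\emptyset$, pick $u_0$ in this intersection (so $u_0\in\Sigma^+$, since $q_i\notin F_i$) and set $w=xu_0z$. Then $u=xu_0$ contains $x$ with suffix $u_0$, so $\varphi(Q_i,xu_0)=\{\delta_i(q_i,u_0)\}\subseteq F_i$ for every $i$, hence $\varphi(Q,xu_0z)=\{s\}$ and $w\in\Syn(\mathscr{A})$; but $w\in(\Sigma\cup\{x\})^*z$, so $w\notin I$, and thus $\Syn(\mathscr{A})\neq I$. Combining the two directions gives the lemma. I expect the only real subtlety to be the ``last $x$ resets each copy $Q_i$ to its initial state $q_i$'' phenomenon: this is precisely what converts a word accepted by all $M_i$ into a reset word of the form $(\text{prefix})\,x\,(\text{witness})\,z$ lying outside $I$, and conversely; the rest is routine verification of the action of $\varphi$ and bookkeeping of the complement of $I$.
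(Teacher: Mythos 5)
Your proposal is correct and follows essentially the same route as the paper: factorize a reset word at its first occurrence of a letter from $\{y,z\}$, use the emptiness of $\bigcap_i L[M_i]$ to show that some state of $Q_j\setminus F_j$ survives until the $z$ (forcing a trailing letter to kill $h$, hence membership in $L_2$), and handle the converse via the witness word $xw'z$. The only differences are cosmetic — you spell out $I\subseteq\Syn(\mathscr{A})$ and the complement of $I$ explicitly, and you unify the paper's subcases 2.1 and 2.2 by taking $u_2$ to be the suffix of $u$ after its last $x$ (or $u$ itself).
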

\begin{proof}
Let $\bigcap^n_{i=1}L[M_i]=\emptyset$. We take a word $w\in\Syn(\mathscr{A}).$ Since $s$ is a sink state in $\mathscr{A},$ we have $\varphi(Q,w)=\{s\}.$ If $w\in(\Sigma\cup\{x\})^+,$ then $\varphi(Q,w)\cap Q_i\neq\emptyset,$ which is a contradiction. Therefore, $w$ contains some factor belonging to $\{y,z\}^+.$ Thus we can factorize $w$ as $w=uav$, where $u$ is a maximal prefix of $w$ belonging to $(\Sigma\cup\{x\})^*,$ $a\in\{y,z\}$ and $v\in\Delta^*.$

\textbf{Case 1}: $a=y.$

Since $y$ maps all the states of $\mathscr{A}$ to a sink state $s,$ we have that $w$ is a reset word and $w\in L_1.$

\textbf{Case 2}: $a=z.$

2.1. Let $u\in\Sigma^*,$ i.e. $u$ does not contain a factor belonging to $\{x\}^+.$ By lemma~\ref{PhiA}, $\varphi(Q,u)\cap Q_i\neq \emptyset$ for all $i=1,\ldots,n.$ Note that $u\not\in\bigcap_i L_i,$ that is $u\not\in L_j$ for some $j.$ It means that $\varphi(q_j,u)\in Q_j\setminus F_j.$ Hence $h\in \varphi(Q,uz).$ More precisely, $\varphi(Q,uz)=\{h,s\}.$ It remains to apply a letter from $\Delta$ in order to map $h$ to $s.$ So we have $w\in L_2.$

2.2. Let $u$ contain a factor belonging to $\{x\}^+.$ Obviously, we may factorize $u$ as $u=u'xt,$ where $t$ is the maximal suffix of $u$ belonging to $\Sigma^*.$ By lemma~\ref{PhiA}, $\varphi(Q,u')\cap Q_i\neq\emptyset$ for all $i=1,\ldots,n.$ Thus $q_i\in\varphi(Q,u'x)$ for all $i=1,\ldots,n.$ By the argument from the previous case, $\varphi(q_j,u'xt)\in Q_j\setminus F_j$ for some index $j.$ Thus $\varphi(Q,u'xtz)=\{h,s\}$ and $w\in L_2.$

So we obtain that if $\bigcap^n_{i=1}L[M_i]=\emptyset,$ then $\Syn(\mathscr{A})\subseteq I.$ The opposite inclusion $I\subseteq\Syn(\mathscr{A})$ follows easily from the arguments above. Assume now that the equality $\Syn(\mathscr{A})=I$ takes place. Arguing by contradiction, assume that $\bigcap^n_{i=1}L[M_i]\neq\emptyset,$ thus there is some $w'\in\bigcap^n_{i=1}L[M_i].$ By the definition of $\varphi$ we get that the word $w=xw'z$ is a reset word for $\mathscr{A}.$ However, $w\not\in I.$ So we came to a contradiction.
\qed
\end{proof}

\begin{figure}[th]
\begin{center}
\unitlength=4pt
\begin{picture}(15,15)
    \gasset{Nw=4,Nh=4,Nmr=2,loopdiam=4}
    \thinlines
\node(s)(15,0){$s$}
\node(p_2)(0,0){$p_2$}
\node(p_1)(7.5,12){$p_1$}
\drawloop[loopangle=90](p_1){$x,\Sigma$}
\drawloop[loopangle=0](s){$\Delta$}
\drawedge(p_1,s){$y$}
\drawedge[ELside=r](p_2,s){$\Delta$}
\drawedge[ELside=r](p_1,p_2){$z$}
\end{picture}
\caption{Automaton $\mathscr{B}$}
\end{center}
\label{FigB}
\end{figure}

Now we build a $3$-state automaton $\mathscr{B}=\langle P,\Delta,\tau\rangle$ (see Fig.~\ref{FigB}). Its state set is $P=\{p_1,p_2,s\},$ where $s$ is the unique sink state. Further we verify that $I$ serves as the language of reset words for $\mathscr{B}.$ Moreover, $\mathscr{B}$ is an MSA for $I.$

\begin{lemma}\label{ResetB}
$\Syn(\mathscr{B})=I.$
\end{lemma}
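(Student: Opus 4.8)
The plan is to show the two inclusions $\Syn(\mathscr{B}) \subseteq I$ and $I \subseteq \Syn(\mathscr{B})$ by directly tracking the image set $\tau(P, w)$ as a word $w$ is read, exactly as in the proof of Lemma~\ref{ResetA}. The automaton $\mathscr{B}$ is tiny, so the analysis is a finite case check on how each letter acts on the three states $p_1, p_2, s$: the letters in $\Sigma \cup \{x\}$ fix $p_1$ and $s$ and send $p_2$ to $s$; the letter $z$ sends $p_1$ to $p_2$ (and fixes $s$, and sends $p_2$ to $s$); the letter $y$ collapses everything to $s$. Since $s$ is a sink, $w$ is reset for $\mathscr{B}$ exactly when $\tau(P,w)=\{s\}$, i.e. when the letters of $w$ eventually move $p_1$ all the way to $s$.

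First I would prove $I \subseteq \Syn(\mathscr{B})$. For a word $w \in L_1 = (\Sigma\cup\{x\})^* y \Delta^*$, write $w = u y v$ with $u \in (\Sigma\cup\{x\})^*$; reading $u$ leaves $p_1$ at $p_1$ (and the other states inside $\{p_1,s\}$... actually inside $\{p_1, s\}$ after possibly absorbing $p_2$), then $y$ collapses the current image to $\{s\}$, and the sink absorbs the rest of $v$. Hence $\tau(P,w)=\{s\}$. For a word $w \in L_2 = (\Sigma\cup\{x\})^* z \Delta^+$, write $w = u z a v$ with $u \in (\Sigma\cup\{x\})^*$, $a \in \Delta$, $v \in \Delta^*$; reading $u$ keeps $p_1$ fixed, $z$ moves $p_1$ to $p_2$ so the image is contained in $\{p_2,s\}$, then any letter $a \in \Delta$ sends $p_2$ to $s$, giving image $\{s\}$, and $v$ is absorbed. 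Hence $w$ is reset, and $I \subseteq \Syn(\mathscr{B})$.

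Next I would prove $\Syn(\mathscr{B}) \subseteq I$. Take $w \in \Syn(\mathscr{B})$; then $\tau(p_1, w) = s$. Consider the first letter of $w$ that is not in $\Sigma \cup \{x\}$ — such a letter must exist, since letters of $\Sigma \cup \{x\}$ fix $p_1$ and could never bring $p_1$ to $s$. So factorize $w = u a v$ with $u \in (\Sigma \cup \{x\})^*$, $a \in \{y,z\}$, $v \in \Delta^*$, and note $\tau(p_1, u) = p_1$. If $a = y$, then $w \in (\Sigma\cup\{x\})^* y \Delta^* = L_1$ and we are done. If $a = z$, then $\tau(p_1, uz) = p_2$; but $p_2$ is not a sink, and the only way a further word maps $p_2$ to $s$ is if it is nonempty (every letter of $\Delta$ sends $p_2$ to $s$, but the empty word does not), so $v$ must be nonempty. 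Hence $w \in (\Sigma\cup\{x\})^* z \Delta^+ = L_2$. In either case $w \in I$, completing the argument.

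I expect no serious obstacle here — $\mathscr{B}$ has only three states and the argument is essentially the same bookkeeping used for $\mathscr{A}$ in Lemma~\ref{ResetA}, just with far fewer cases. The one point requiring a little care is the direction $\Syn(\mathscr{B}) \subseteq I$ in the subcase $a = z$: one must observe that after $z$ maps $p_1$ to the non-sink state $p_2$, at least one more letter is genuinely needed, which is why $L_2$ carries $\Delta^+$ rather than $\Delta^*$. This is exactly why $\varepsilon \notin I$, matching the fact that $\varepsilon$ is never a reset word for a multi-state automaton. (The accompanying claim that $\mathscr{B}$ is an MSA for $I$ — i.e. $rc(I) = 3$ — would follow from the observation that no synchronizing automaton with one or two states can have $I$ as its language of reset words, but that is a separate statement and not part of this lemma.)
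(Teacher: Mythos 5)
Your proof is correct and follows essentially the same route as the paper: verify $I\subseteq\Syn(\mathscr{B})$ directly, then factorize a reset word $w=uav$ at the first letter $a\in\{y,z\}$ and split into the cases $a=y$ and $a=z$. You merely spell out the bookkeeping (including why $v$ must be nonempty when $a=z$) that the paper leaves implicit.
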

\begin{proof}
It is clear that $I\subseteq \Syn(\mathscr{B}).$ Let $w\in\Syn(\mathscr{B}).$ Obviously, $w\not\in(\Sigma\cup\{x\})^*.$ Thus we may factorize $w$ as $w=uav,$ where $(\Sigma\cup\{x\})^*,$ $a\in\{y,z\}$ and $v\in\Delta^*.$ If $a=y$ then $w\in L_1.$ If $a=z$ then $\tau(Q,uz)=\{p_2,s\}.$ Since $w$ is a reset word for $\mathscr{B}$ we obtain that $w\in L_2.$ So we have the inclusion $\Syn(\mathscr{B})\subseteq I.$
\qed
\end{proof}

For each instance of FINITE AUTOMATA INTERSECTION one may construct the corresponding automaton $\mathscr{A}$ and the DFA $\mathscr{B}.$ It is easy to check that $I$ does not serve as the language of reset words for a synchronizing automaton of size at most two over the same alphabet $\Delta.$ So $\mathscr{B}$ is an MSA for $I$ and $rc(\Syn(\mathscr{B}))=3.$ Furthermore, $\mathscr{B}$ is a finitely generated synchronizing automaton, that is its language of reset words $\Syn(\mathscr{B})$ can be represented as $\Syn(\mathscr{B})=\Delta^*U\Delta^*$ for some finite set of words $U.$ Namely, $U=y\cup z\Delta.$ Finitely generated synchronizing automata and its languages of reset words were studied in \cite{FinGen,PribR1,PribRoPSPACE}. In particular, it was shown in \cite{PribRoPSPACE} that recognizing finitely generated synchronizing automata is a \textbf{PSPACE}-complete problem.
Finally, by lemmas~\ref{ResetA} and~\ref{ResetB}, we have the following claim.
\begin{lemma}\label{SeparateAB}
$\bigcap^n_{i=1}L[M_i]=\emptyset$ if and only if $\Syn(\mathscr{A})=\Syn(\mathscr{B}).$
\end{lemma}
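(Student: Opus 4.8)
The plan is to combine the two preceding lemmas mechanically. Lemma~\ref{ResetA} gives the equivalence $\bigcap^n_{i=1}L[M_i]=\emptyset \iff \Syn(\mathscr{A})=I$, and Lemma~\ref{ResetB} gives $\Syn(\mathscr{B})=I$. Substituting the latter into the former immediately yields $\bigcap^n_{i=1}L[M_i]=\emptyset \iff \Syn(\mathscr{A})=\Syn(\mathscr{B})$, which is exactly the claim. So at the level of pure logic there is essentially nothing left to do; the statement is a one-line corollary of the work already carried out.

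The one thing worth spelling out is why this is more than a cosmetic restatement, i.e., why Lemma~\ref{SeparateAB} actually delivers the advertised reduction. First I would observe that both $\mathscr{A}$ and $\mathscr{B}$ are genuinely synchronizing (for $\mathscr{A}$ the letter $y$ sends every state to the sink $s$; for $\mathscr{B}$ the same holds), so the pair $(\mathscr{A},\mathscr{B})$ is a legitimate instance of SYN-EQUALITY. Next I would note that $\mathscr{A}$ has $|\bigcup_i Q_i| + 2$ states and alphabet $\Sigma\cup\{x,y,z\}$, while $\mathscr{B}$ has $3$ states over the same alphabet, and the transition tables $\varphi$ and $\tau$ are defined by finitely many uniform rules; hence both automata can be built from the input DFAs $M_1,\ldots,M_n$ in time (and space) polynomial in the total size of the input. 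Therefore the map $(M_1,\ldots,M_n)\mapsto(\mathscr{A},\mathscr{B})$ is a polynomial-time many-one reduction, and by Lemma~\ref{SeparateAB} it reduces the complement of FINITE AUTOMATA INTERSECTION to SYN-EQUALITY. Since FINITE AUTOMATA INTERSECTION is \textbf{PSPACE}-complete \cite{Kozen} and \textbf{PSPACE} is closed under complement, SYN-EQUALITY is \textbf{PSPACE}-hard; together with Corollary~\ref{SYN-EQUALITY} it is \textbf{PSPACE}-complete.

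I do not anticipate any real obstacle here: the substance of the argument lives entirely in Lemmas~\ref{PhiA}, \ref{ResetA} and~\ref{ResetB}, which handle the delicate direction-by-direction language analysis (in particular the case split on whether a reset word's first occurrence of a letter outside $\Sigma\cup\{x\}$ is $y$ or $z$, and the use of the ``restart'' letter $x$ to force every component automaton into its initial state simultaneously). The only place one could slip is in asserting the reduction is polynomial-time — but since $\mathscr{B}$ has constant size and $\mathscr{A}$ is $\mathscr{A}$ essentially the disjoint union of the $M_i$ plus two extra states and three extra letters with rule-based transitions, this is routine. Thus the proof of Lemma~\ref{SeparateAB} itself is just the two-line substitution, and the surrounding remarks about hardness are what I would record immediately afterward (as the paper in fact does in the discussion following it).
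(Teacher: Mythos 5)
Your proof is correct and is essentially identical to the paper's: Lemma~\ref{SeparateAB} is obtained there exactly by substituting the conclusion of Lemma~\ref{ResetB} ($\Syn(\mathscr{B})=I$) into the equivalence of Lemma~\ref{ResetA}. The additional remarks you make about the reduction being polynomial-time belong to (and match) the proof of Theorem~\ref{EqualityPsp} rather than the lemma itself, but they are accurate.
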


Now we are in position to prove the main result.

\begin{theorem}\label{EqualityPsp}
SYN-EQUALITY is \textbf{PSPACE}-complete.
\end{theorem}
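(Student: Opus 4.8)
The plan is to combine the two ingredients already assembled: membership of SYN-EQUALITY in \textbf{PSPACE} (Corollary~\ref{SYN-EQUALITY}) and the reduction encoded in Lemma~\ref{SeparateAB}. For hardness, I would observe that the maps $\langle M_1,\ldots,M_n\rangle\mapsto\mathscr{A}$ and $\langle M_1,\ldots,M_n\rangle\mapsto\mathscr{B}$ are both computable in polynomial (indeed logarithmic) space: $\mathscr{B}$ has a fixed size, and $\mathscr{A}$ has state set $\bigcup_i Q_i\cup\{s,h\}$ and alphabet $\Sigma\cup\{x,y,z\}$, with transitions given by the explicit rules for $\varphi$, so the whole construction is a straightforward local rewriting of the input. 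Hence $\langle M_1,\ldots,M_n\rangle\mapsto\langle\mathscr{A},\mathscr{B}\rangle$ is a valid polynomial-time many-one reduction from the complement of FINITE AUTOMATA INTERSECTION to SYN-EQUALITY: by Lemma~\ref{SeparateAB}, $\bigcap_i L[M_i]=\emptyset$ exactly when $\Syn(\mathscr{A})=\Syn(\mathscr{B})$.

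The second step is to record that the complement of FINITE AUTOMATA INTERSECTION is itself \textbf{PSPACE}-complete. FINITE AUTOMATA INTERSECTION is \textbf{PSPACE}-complete by Kozen~\cite{Kozen}, and \textbf{PSPACE} is closed under complement (it is deterministic, or alternatively invoke $\textbf{PSPACE}=\textbf{NPSPACE}=\textbf{coNPSPACE}$ via Savitch), so the non-emptiness question and its negation are both \textbf{PSPACE}-complete. Therefore the reduction above witnesses \textbf{PSPACE}-hardness of SYN-EQUALITY.

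Putting the pieces together: SYN-EQUALITY is in \textbf{PSPACE} by Corollary~\ref{SYN-EQUALITY}, and it is \textbf{PSPACE}-hard by the reduction from the complement of FINITE AUTOMATA INTERSECTION just described, so it is \textbf{PSPACE}-complete. I do not anticipate a genuine obstacle here, since all the content lives in Lemmas~\ref{ResetA}--\ref{SeparateAB} and in Theorem~\ref{PSPACE}; the only points requiring a sentence of care are verifying that the construction of $\mathscr{A}$ and $\mathscr{B}$ is space-bounded (immediate from the explicit transition tables) and noting explicitly that we are reducing the \emph{complement} of FINITE AUTOMATA INTERSECTION, which is legitimate because \textbf{PSPACE} is closed under complementation. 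One could also remark, as the paper promises in the introduction, that since $\mathscr{B}$ is a fixed small automaton depending only on $\Delta$, this in fact proves the stronger statement that deciding whether $\Syn(\mathscr{A})$ equals the reset language of one particular automaton is already \textbf{PSPACE}-complete.
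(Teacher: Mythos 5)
Your proposal is correct and follows essentially the same route as the paper: membership via Corollary~\ref{SYN-EQUALITY}, and hardness by the polynomial-time reduction of (the complement of) FINITE AUTOMATA INTERSECTION through Lemmas~\ref{ResetA}, \ref{ResetB} and~\ref{SeparateAB}. You merely make explicit two points the paper leaves implicit --- that the construction of $\mathscr{A}$ and $\mathscr{B}$ is space/time-bounded and that closure of \textbf{PSPACE} under complementation justifies reducing to the complement --- which is a harmless and indeed welcome elaboration.
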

\begin{proof}
By Corollary~\ref{SYN-EQUALITY}, SYN-EQUALITY is in \textbf{PSPACE}. Since the construction of automata $\mathscr{A}$ and $\mathscr{B}$ can be performed in polynomial time from the automata $M_i$ ($i=1,\ldots,n$), by Lemmas~\ref{ResetA},~\ref{ResetB} and~\ref{SeparateAB}, we can reduce FINITE AUTOMATA INTERSECTION to co-SYN-EQUALITY.
\qed
\end{proof}

\begin{theorem}\label{InclusionPsp}
SYN-STRICT-INCLUSION is \textbf{PSPACE}-complete.
\end{theorem}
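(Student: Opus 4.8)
The plan is to mimic the reduction already used for SYN-EQUALITY, tweaking the target automaton $\mathscr{B}$ so that its language of reset words is a proper superset of the candidate language rather than exactly that language. Recall that in Lemma~\ref{SeparateAB} we have $\Syn(\mathscr{A})\subseteq I=\Syn(\mathscr{B})$ unconditionally, and $\Syn(\mathscr{A})=I$ precisely when $\bigcap_i L[M_i]=\emptyset$; when the intersection is nonempty, the word $w=xw'z$ with $w'\in\bigcap_i L[M_i]$ witnesses $\Syn(\mathscr{A})\subsetneq I$. So for the strict-inclusion problem it is more natural to ask whether $\Syn(\mathscr{A})\subsetneq\Syn(\mathscr{B})$, which by the above is equivalent to $\bigcap_i L[M_i]\neq\emptyset$. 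Thus the very same polynomial-time construction of $\mathscr{A}$ and $\mathscr{B}$ reduces FINITE AUTOMATA INTERSECTION directly to SYN-STRICT-INCLUSION (no complementation needed this time): $\bigcap_i L[M_i]\neq\emptyset \iff \Syn(\mathscr{A})\subsetneq\Syn(\mathscr{B})$.

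First I would record membership in \textbf{PSPACE}: by Theorem~\ref{PSPACE} the problem $\Syn(\mathscr{A})\subseteq\Syn(\mathscr{B})$ is in \textbf{PSPACE}, and by the symmetric use of that algorithm $\Syn(\mathscr{B})\subseteq\Syn(\mathscr{A})$ is in \textbf{PSPACE} as well; since $\Syn(\mathscr{A})\subsetneq\Syn(\mathscr{B})$ is the conjunction of the first with the negation of the second, and \textbf{PSPACE} is closed under complement and intersection, SYN-STRICT-INCLUSION is in \textbf{PSPACE}. Next I would establish hardness by invoking Lemmas~\ref{ResetA} and~\ref{ResetB}: we always have $\Syn(\mathscr{A})\subseteq I$ (the ``easy'' inclusion noted inside the proof of Lemma~\ref{ResetA}), and $\Syn(\mathscr{B})=I$ by Lemma~\ref{ResetB}, so $\Syn(\mathscr{A})\subseteq\Syn(\mathscr{B})$ holds for every instance. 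Therefore $\Syn(\mathscr{A})\subsetneq\Syn(\mathscr{B})$ holds iff $\Syn(\mathscr{A})\neq I$, which by Lemma~\ref{ResetA} happens iff $\bigcap_i L[M_i]\neq\emptyset$. Since $\mathscr{A}$ and $\mathscr{B}$ are built in polynomial time from the $M_i$, this is a polynomial-time many-one reduction from FINITE AUTOMATA INTERSECTION to SYN-STRICT-INCLUSION, and the latter is \textbf{PSPACE}-hard.

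The only real subtlety — and the step I expect to need the most care — is verifying that the reduction is to the strict-inclusion problem in the stated direction rather than its complement, i.e. double-checking that $\Syn(\mathscr{A})\subseteq\Syn(\mathscr{B})$ is genuinely unconditional. This rests on the inclusion $I\subseteq\Syn(\mathscr{A})$ together with the fact that when the intersection is nonempty the containment $\Syn(\mathscr{A})\subseteq I$ still holds for words avoiding the witness pattern but fails to be an equality because of $w=xw'z\in\Syn(\mathscr{A})\setminus I$; equivalently, one checks directly from the transition table of $\varphi$ that every word in $\Syn(\mathscr{A})$ that lies in $I$ is also a reset word of $\mathscr{B}$ and that $\Syn(\mathscr{A})\subseteq I\cup\{xw'z : w'\in\bigcap_i L[M_i]\}$, which is contained in $I=\Syn(\mathscr{B})$ in one direction and strictly larger only in the other. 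Once this bookkeeping is in place the theorem follows.

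\begin{proof}
Membership in \textbf{PSPACE}: by Theorem~\ref{PSPACE}, testing $\Syn(\mathscr{A})\subseteq\Syn(\mathscr{B})$ is in \textbf{PSPACE}, and by symmetry so is testing $\Syn(\mathscr{B})\subseteq\Syn(\mathscr{A})$. As $\Syn(\mathscr{A})\subsetneq\Syn(\mathscr{B})$ is equivalent to the conjunction of $\Syn(\mathscr{A})\subseteq\Syn(\mathscr{B})$ with the negation of $\Syn(\mathscr{B})\subseteq\Syn(\mathscr{A})$, and \textbf{PSPACE} is closed under complementation and intersection, SYN-STRICT-INCLUSION lies in \textbf{PSPACE}.

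For hardness we reduce FINITE AUTOMATA INTERSECTION, using exactly the automata $\mathscr{A}$ and $\mathscr{B}$ constructed above from the DFAs $M_i$, $i=1,\ldots,n$. By Lemma~\ref{ResetB} we have $\Syn(\mathscr{B})=I$, and the inclusion $\Syn(\mathscr{A})\subseteq I$ proved inside Lemma~\ref{ResetA} is in fact independent of whether the intersection is empty (when $\bigcap_{i}L[M_i]=\emptyset$ it follows from the case analysis given there; when $\bigcap_{i}L[M_i]\neq\emptyset$ the same case analysis shows every reset word of $\mathscr{A}$ lies in $I$ unless it contains a factor $xw'z$ with $w'\in\bigcap_{i}L[M_i]$, and such a word lies in $L_{2}\subseteq I$ as well). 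Hence $\Syn(\mathscr{A})\subseteq\Syn(\mathscr{B})$ holds for every instance, so
\[
\Syn(\mathscr{A})\subsetneq\Syn(\mathscr{B})\ \Longleftrightarrow\ \Syn(\mathscr{A})\neq I.
\]
By Lemma~\ref{ResetA}, $\Syn(\mathscr{A})=I$ iff $\bigcap_{i}L[M_i]=\emptyset$, and therefore $\Syn(\mathscr{A})\subsetneq\Syn(\mathscr{B})$ iff $\bigcap_{i}L[M_i]\neq\emptyset$. Since $\mathscr{A}$ and $\mathscr{B}$ are computable from the $M_i$ in polynomial time, this is a polynomial-time many-one reduction from FINITE AUTOMATA INTERSECTION, whence SYN-STRICT-INCLUSION is \textbf{PSPACE}-hard, and combined with the first paragraph, \textbf{PSPACE}-complete.
\qed
\end{proof}
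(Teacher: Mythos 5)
Your membership argument is fine and matches the paper's. The hardness part, however, contains a genuine error: the unconditional inclusion between $\Syn(\mathscr{A})$ and $\Syn(\mathscr{B})=I$ goes in the \emph{opposite} direction from what you claim. What holds for every instance is $I\subseteq\Syn(\mathscr{A})$ (every word of $L_1$ or $L_2$ resets $\mathscr{A}$, as the paper notes), while $\Syn(\mathscr{A})\subseteq I$ holds \emph{only} when $\bigcap_i L[M_i]=\emptyset$: the case analysis in Lemma~\ref{ResetA} uses the emptiness of the intersection precisely to guarantee that $h\in\varphi(Q,uz)$, which forces an extra letter after $z$ and lands the word in $L_2$. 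When the intersection is nonempty, the witness $w=xw'z$ with $w'\in\bigcap_i L[M_i]$ lies in $\Syn(\mathscr{A})\setminus I$ — your parenthetical claim that such a word ``lies in $L_2\subseteq I$ as well'' is false, since $L_2=(\Sigma\cup\{x\})^*z\Delta^{+}$ demands a nonempty suffix after the $z$, and the only occurrence of $z$ in $xw'z$ is its last letter (cf.\ the sets $I_3,I_4$ in the paper's final theorem, which are exactly the reset words outside $I$). Consequently $\Syn(\mathscr{B})\subseteq\Syn(\mathscr{A})$ always, so $\Syn(\mathscr{A})\subsetneq\Syn(\mathscr{B})$ is \emph{never} true for instances produced by your reduction, which therefore maps everything to a ``no'' instance and proves nothing.

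The repair is small: swap the roles and output the pair $(\mathscr{B},\mathscr{A})$, since $\Syn(\mathscr{B})\subsetneq\Syn(\mathscr{A})$ holds iff $\bigcap_i L[M_i]\neq\emptyset$; with that fix your argument gives a clean many-one reduction from FINITE AUTOMATA INTERSECTION. Note also that the paper proceeds differently: it builds the product automaton $\mathscr{A}\times\mathscr{B}$ with $\Syn(\mathscr{A}\times\mathscr{B})=L\cap L'$ and uses the equivalence ``$L\neq L'$ iff $L\cap L'\subsetneq L$ or $L\cap L'\subsetneq L'$'' to reduce the (PSPACE-hard) inequality problem to two instances of SYN-STRICT-INCLUSION. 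That route is generic (it works for arbitrary pairs of synchronizing automata, not just the ones from the intersection construction) at the cost of being a disjunctive truth-table reduction rather than a many-one reduction; your corrected version is arguably the more direct argument.
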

\begin{proof}
By theorem~\ref{PSPACE}, SYN-STRICT-INCLUSION is in \textbf{PSPACE}. We show that SYN-STRICT-INCLUSION is reduced to the SYN-EQUALITY problem and vice versa. Let $\mathscr{A}=\langle Q_1,\Sigma,\delta_1\rangle$ and $\mathscr{B}=\langle Q_2,\Sigma,\delta_2\rangle$ be synchronizing DFAs with $\Syn(\mathscr{A})=L$ and $\Syn(\mathscr{B})=L'.$ Define the automaton $\mathscr{A}\times\mathscr{B}=\langle Q,\Sigma,\delta\rangle$ in the following way:
\begin{align*}
Q&=Q_1\times Q_2;\\
\delta((q_1,q_2),a)&=(\delta_1(q_1,a),\delta_2(q_2,a))
\end{align*}
\noindent for all $(q_1,q_2)\in Q$ and $a\in\Sigma.$ Clearly, $\Syn(\mathscr{A}\times\mathscr{B})=L\cap L'.$ Thus $L\subsetneq L'$ if and only if $L=L\cap L'$ and $L\neq L'.$ On the other hand, $L\neq L'$ if and only if either $L\cap L'\subsetneq L$ or $L\cap L'\subsetneq L'.$
\qed
\end{proof}

Let us note that we build synchronizing automata $\mathscr{A}$ and $\mathscr{B}$ over at least $5$-letter alphabet to obtain an instance of SYN-EQUALITY. What about alphabets of size less than five? It can be easily seen that, for automata over a unary alphabet, SYN-EQUALITY can be solved in polynomial time. Indeed, if $\mathscr{A}=\langle Q_{1},\{a\},\delta_1\rangle$ is a synchronizing DFA, then $\Syn(\mathscr{A})=a^*a^k,$ where $k$ is the length of the shortest reset word for $\mathscr{A}.$ Furthermore, it is easy to check that $k<|Q_1|.$ Analogously, the language of reset words for a DFA $\mathscr{B}=\langle Q_{2},\{a\},\delta_2\rangle$ is $\Syn(\mathscr{B})=a^*a^m,$ where $m<|Q_2|.$ Finally, positive integer numbers $k$ and $m$ can be found in polynomial time. So it is interesting to consider automata over alphabets of size at least two.

We have reduced the problem FINITE AUTOMATA INTERSECTION to the problem SYN-EQUALITY.
By construction of DFAs $\mathscr{A}$ and $\mathscr{B}$, we have $\Delta=\{y,z,a,b,x\}.$ We build DFAs $\mathscr{C}=\langle C,\{\mu,\lambda\},\varphi_{2}\rangle$ and $\mathscr{D}=\langle D, \{\mu,\lambda\},\tau_2\rangle$ with unique sink states $\zeta_1$ and $\zeta_2$ respectively. It will be shown that $\Syn(\mathscr{A})=\Syn(\mathscr{B})$ if and only if $\Syn(\mathscr{C})=\Syn(\mathscr{D}).$ A standard technique is applied here and also was used in \cite{AnGuVol,Martygin,SOFSEM}. Namely, we define morphisms $h:\{\lambda,\mu\}^*\lambda\rightarrow\Delta^*$ and $\overline{h}:\Delta^*\rightarrow\{\lambda,\mu\}^*\lambda$ preserving the property of being a reset word for the corresponding automaton. Let $d_1=y$, $d_2=z,$ $d_3=a$, $d_4=b$ and $d_5=x$. We put $h(\mu^k\lambda)=d_{k+1}$ for $k=0,\ldots,4$ and $h(\mu^k\lambda)=d_5=x$ for $k\geq 5.$ Every word from the set $\{\lambda,\mu\}^*\lambda$ can be uniquely factorized by words $\mu^k\lambda,$ $k=0,1,2,\ldots,$ thus the mapping $h$ is totally defined. We also consider the morphism $\overline{h}:\Delta^*\rightarrow\{\lambda,\mu\}^*\lambda$ defined by the rule $\overline{h}(d_k)=\mu^{k-1}\lambda.$ Note that for every word $u\in\Delta^*$ we have $h(\overline{h}(u))=u.$

Now we take the constructed above DFA $\mathscr{B}=\langle P,\Delta,\tau\rangle$ with the state set $P=\{p_1,p_2,s\}.$ We build the DFA $\mathscr{D}=\langle D, \{\mu,\lambda\},\tau_2\rangle$ with a unique sink state $\zeta_2.$

We associate each state $p_i$ of the automaton $\mathscr{B}$ with the $5$-element set of states $P_i=\{p_{i,1},\ldots,p_{i,5}\}$ of the automaton $\mathscr{D}.$ Namely, the states $p_{i,2}, p_{i,3}, p_{i,4}, p_{i,5}$ are copies of the state $p_i$ associated with $p_{i,1}.$ The action of the letter $\mu$ is defined in the following way: $\tau_2(p_{i,k},\mu)=p_{i,k+1}$ for $k\leq4,$ and $\tau_2(p_{i,5},\mu)=p_{i,5}.$ We put $D=P_1\cup P_2\cup\{\zeta_2\},$ where $\zeta_2$ is a unique sink state. The action of the letter $\lambda$ is defined by the rules:

\noindent-- if $\tau(p_i,d_k)=s$, then $\tau_2(p_{i,k},\lambda)=\zeta_2;$

\noindent-- if $\tau(p_i,d_k)=p_j$, then $\tau_2(p_{i,k},\lambda)=p_{j,1}.$

The latter rule means that if there is the transition from $p_i$ to $p_j$ labeled by the letter $d_k,$ then there is the transition from $p_{i,1}$ to $p_{j,1}$ labeled by the word $\mu^{k-1}\lambda.$

$P_i$ is called the $i$\emph{-th colomn} of the set $D.$
For each $k=1,\ldots,5,$ one may take the set $R_k=\{p_{1,k},p_{2,k}\}.$ The set $R_k$ is called the $k$-\emph{th row} of the set $D.$

The DFA $\mathscr{C}$ is constructed in analogous way. Finally, note that the resulting automata $\mathscr{C}$ and $\mathscr{D}$ have $O(5|Q_1|)$ and $O(5|Q_2|)$ states respectively, where $|Q_1|$ and $|Q_2|$ are the cardinalities of the state sets of $\mathscr{A}$ and $\mathscr{B}$ respectively.
Figure~3 illustrates the automaton $\mathscr{D}.$ The action of the letter $\mu$ is shown in solid lines, the action of the letter $\lambda$ is shown in dotted lines.
\begin{figure}[th]
\begin{center}
\unitlength=4pt
\begin{picture}(20,55)
    \gasset{Nw=4,Nh=4,Nmr=2,loopdiam=4}
    \thinlines
\node(11)(5,50){$p_{1,1}$}
\node(12)(5,40){$p_{1,2}$}
\node(13)(5,30){$p_{1,3}$}
\node(14)(5,20){$p_{1,4}$}
\node(15)(5,10){$p_{1,5}$}
\node(21)(15,50){$p_{2,1}$}
\node(22)(15,40){$p_{2,2}$}
\node(23)(15,30){$p_{2,3}$}
\node(24)(15,20){$p_{2,4}$}
\node(25)(15,10){$p_{2,5}$}
\node(s)(25,50){$\zeta_2$}
\drawloop[loopangle=-45](s){}
\drawloop[loopangle=-90](25){}
\drawloop[loopangle=-90](15){}
\drawedge(11,12){}
\drawedge(12,13){}
\drawedge(13,14){}
\drawedge(14,15){}
\drawedge(21,22){}
\drawedge(22,23){}
\drawedge(23,24){}
\drawedge(24,25){}
\drawedge[dash={0.5}0](21,s){}
\drawedge[dash={0.5}0](22,s){}
\drawedge[dash={0.5}0](23,s){}
\drawedge[dash={0.5}0](24,s){}
\drawedge[dash={0.5}0](25,s){}
\drawedge[dash={0.5}0](12,21){}
\drawloop[loopangle=45,dash={0.5}0](s){}
\drawedge[curvedepth=3,dash={0.5}0](13,11){}
\drawedge[curvedepth=5,dash={0.5}0](14,11){}
\drawedge[curvedepth=7,dash={0.5}0](15,11){}
\drawedge[curvedepth=3,dash={0.5}0](11,s){}
\put(-10,50){$y$}
\put(-10,40){$z$}
\put(-10,30){$a$}
\put(-10,20){$b$}
\put(-10,10){$x$}
\end{picture}
\caption{Automaton $\mathscr{D}$}
\end{center}
\label{FigD}
\end{figure}

\begin{lemma}\label{ResetCD}
$\Syn(\mathscr{A})=\Syn(\mathscr{B})$ if and only if $\Syn(\mathscr{C})=\Syn(\mathscr{D}).$
\end{lemma}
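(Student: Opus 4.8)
The plan is to show that the morphisms $h$ and $\overline{h}$ set up a bijective correspondence between reset words of $\mathscr{A}$ and reset words of $\mathscr{C}$ (and likewise between those of $\mathscr{B}$ and $\mathscr{D}$), so that $\Syn(\mathscr{C})=h^{-1}(\Syn(\mathscr{A}))\cap(\{\lambda,\mu\}^*\lambda)$ in a suitable sense, and then transfer the equality $\Syn(\mathscr{A})=\Syn(\mathscr{B})$ through this correspondence. The key point is that a single letter $d_k\in\Delta$ acting on $\mathscr{B}$ is simulated in $\mathscr{D}$ by the word $\mu^{k-1}\lambda$ carrying the first row $R_1$ through the $k$-th row $R_k$ and then, via the $\lambda$-transitions, back to a copy of $R_1$ (or into the sink). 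Symmetrically $\mathscr{C}$ simulates $\mathscr{A}$.

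First I would establish the simulation lemma precisely: for every $u\in\Delta^*$ and every state $p_i$ of $\mathscr{B}$, one has $\tau_2(p_{i,1},\overline{h}(u))=(\tau(p_i,u))_{\cdot,1}$, where $(\cdot)_{\cdot,1}$ denotes the first-row copy and the sink $s$ maps to $\zeta_2$; this follows by induction on $|u|$ using the defining rules for $\lambda$ and $\mu$, and the analogous statement holds for $\mathscr{C}$ and $\mathscr{A}$. Conversely, any word $v\in\{\lambda,\mu\}^*\lambda$ decomposes uniquely as a concatenation of blocks $\mu^{k-1}\lambda$ because every occurrence of $\lambda$ closes exactly one block, and $h$ sends such a $v$ to the word $u\in\Delta^*$ obtained by replacing block $\mu^{k-1}\lambda$ with $d_{\min(k,5)}$; note $h(\overline{h}(u))=u$. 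The only subtlety is that blocks $\mu^{k-1}\lambda$ with $k\ge 5$ all map to $x=d_5$, but this is harmless because $x$ is the letter that sends every non-sink state back to its initial state in $\mathscr{A}$ and acts as the identity on the relevant rows in $\mathscr{C}$ by construction.

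Next I would argue that $w\in\Syn(\mathscr{C})$ forces $w$ to end in $\lambda$: a word ending in a nonempty block of $\mu$'s leaves the rows "spread out" among distinct copies and cannot collapse everything to $\zeta_1$, since $\mu$ never merges states from different columns and is injective on each column except at the bottom row. Hence $\Syn(\mathscr{C})\subseteq\{\lambda,\mu\}^*\lambda\cup(\text{prefixes already in }\Syn(\mathscr{C})\text{ extended by }\mu^*)$ — more carefully, $\Syn(\mathscr{C})=\Syn(\mathscr{C})\cap\{\lambda,\mu\}^*\lambda$ up to the ideal-closure $\{\lambda,\mu\}^*(\cdots)\{\lambda,\mu\}^*$, and every element of $\Syn(\mathscr{C})$ has a prefix in $\{\lambda,\mu\}^*\lambda$ that is already reset. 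Combining this with the simulation lemma gives: for $v\in\{\lambda,\mu\}^*\lambda$, $v\in\Syn(\mathscr{C})\iff h(v)\in\Syn(\mathscr{A})$, and for $u\in\Delta^*$, $u\in\Syn(\mathscr{A})\iff\overline{h}(u)\in\Syn(\mathscr{C})$; the same two equivalences hold with $\mathscr{B},\mathscr{D}$ in place of $\mathscr{A},\mathscr{C}$.

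Finally I would assemble the equivalence. If $\Syn(\mathscr{A})=\Syn(\mathscr{B})$, take any $w\in\Syn(\mathscr{C})$; it has a prefix $v\in\{\lambda,\mu\}^*\lambda$ with $v\in\Syn(\mathscr{C})$, so $h(v)\in\Syn(\mathscr{A})=\Syn(\mathscr{B})$, hence $v=\overline{h}(h(v))$-up-to-block-normalization lies in $\Syn(\mathscr{D})$, and since $\Syn(\mathscr{D})$ is an ideal, $w\in\Syn(\mathscr{D})$; by symmetry $\Syn(\mathscr{C})=\Syn(\mathscr{D})$. Conversely, if $\Syn(\mathscr{A})\neq\Syn(\mathscr{B})$, pick $u$ in the symmetric difference, say $u\in\Syn(\mathscr{A})\setminus\Syn(\mathscr{B})$; then $\overline{h}(u)\in\Syn(\mathscr{C})$ but $\overline{h}(u)\notin\Syn(\mathscr{D})$ (the latter because $h(\overline{h}(u))=u\notin\Syn(\mathscr{B})$ and $\overline{h}(u)$ itself ends in $\lambda$, so if it were reset for $\mathscr{D}$ the simulation lemma would put $u$ in $\Syn(\mathscr{B})$), so $\Syn(\mathscr{C})\neq\Syn(\mathscr{D})$. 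I expect the main obstacle to be the bookkeeping in the direction "$w\in\Syn(\mathscr{C})$ implies $w$ has a reset prefix ending in $\lambda$": one must verify carefully that trailing $\mu$'s can always be deleted from a reset word and that the collapsing of all $\mu^{k-1}\lambda$ with $k\ge 5$ to the single letter $x$ does not spuriously identify a non-reset word of $\mathscr{A}$ with a reset word of $\mathscr{C}$ or vice versa — this is exactly where the hypothesis that $x$ acts on each $Q_i$ by sending everything to $q_i$, and the matching construction of $\mathscr{C}$'s bottom row, must be invoked.
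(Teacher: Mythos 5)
Your overall strategy is the same as the paper's: use the morphisms $h$ and $\overline{h}$ to transfer reset words between $\mathscr{A},\mathscr{B}$ and $\mathscr{C},\mathscr{D}$, reduce to reset words ending in $\lambda$ (the paper phrases this via minimal reset words, you via reset prefixes in $\{\lambda,\mu\}^*\lambda$ --- these are equivalent, and your justification that trailing $\mu$'s can be stripped because $\mu$ never merges distinct columns nor enters the sink is correct), and then push the equality through. In the ``only if'' direction you are in fact a little cleaner than the paper, which invokes the specific separating word $xw'z$ from Lemma~2, whereas you only need an arbitrary element of the symmetric difference.

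There is, however, one step that does not follow from what you prove, namely ``$h(v)\in\Syn(\mathscr{B})$, hence $v$ lies in $\Syn(\mathscr{D})$.'' Your simulation lemma states $\tau_2(p_{i,1},\overline{h}(u))=(\tau(p_i,u))_{\cdot,1}$, i.e.\ it controls only the first row $R_1$. But a reset word for $\mathscr{D}$ must collapse the \emph{entire} state set $D$, including the copies $p_{i,2},\dots,p_{i,5}$, and the first occurrence of $\lambda$ in $v$ sends row $k$ to the first-row image of $\tau(\cdot,d_k)$ for \emph{every} $k$, not just the one block index prescribed by $h(v)$. Concretely, $h(\lambda)=d_1=y\in\Syn(\mathscr{B})$, yet $\lambda\notin\Syn(\mathscr{D})$ because $\tau_2(p_{1,3},\lambda)=p_{1,1}\neq\zeta_2$; similarly $h(\lambda\mu^2\lambda)=ya\in\Syn(\mathscr{B})$ while $\lambda\mu^2\lambda\notin\Syn(\mathscr{D})$. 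So the equivalence ``$v\in\Syn(\mathscr{D})\iff h(v)\in\Syn(\mathscr{B})$'' holds only in the forward direction (the one you use to get $h(v)\in\Syn(\mathscr{A})$ from $v\in\Syn(\mathscr{C})$); the reverse direction needs an extra argument showing that a word collapsing $\tau(P,h(v))$ also collapses the larger set $\bigcup_{j\ge k}\tau(P,d_j)$ produced by the first block. For these particular automata this can be checked (it comes down to the fact that collapsing $\{p_1,s\}$ in $\mathscr{B}$, and collapsing any set containing $\{q_1,\dots,q_n,s\}$ in $\mathscr{A}$, are each again equivalent to membership in $I$), but it does not follow from $\Syn(\mathscr{A})=\Syn(\mathscr{B})$ formally, and it is not where you locate the difficulty: the subtlety you flag (blocks $\mu^{k-1}\lambda$ with $k\ge5$ all collapsing to $x$) is real but harmless precisely because $\mu$ fixes the fifth row, whereas the rows-$2$-through-$5$ issue at the start of the word is the one that actually needs work. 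To be fair, the paper's own proof asserts the same biconditional without proof and silently relies on the same missing verification.
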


\begin{proof}
It is convenient to organize the constructed DFA $\mathscr{D}$ as a table. The $k$-th row contains copies of all states corresponding to the $k$-th letter from $\Delta.$ The $i$-th column contains the state $p_{i,1}$ corresponding to the state $p_i$  and its copies $p_{i,2},\ldots,p_{i,5}.$ Each state from the $i$-th column maps under the action of $\mu$ to a state from the same column. The $k$-th row $R_k$ maps under the action of $\mu$ to the $k+1$-st row $R_{k+1}$ ($k\leq 4$). The 5-th row is fixed by $\mu$, that is $\tau_2(R_5,\mu)=R_5.$ The state set $D$ maps under the action of $\lambda$ to a subset of $R_1.$
The DFA $\mathscr{C}$ possesses such properties as well.

It can be easily checked that the word $u\in\Delta^*$ is reset for the DFA $\mathscr{B}$ if and only if $\overline{h}(u)$ is reset for $\mathscr{D}.$ An analogous property takes place for DFAs $\mathscr{A}$ and $\mathscr{C}.$

Assume that $\Syn(\mathscr{A})\neq\Syn(\mathscr{B}).$ From the proof of lemma~\ref{ResetA} it follows that the word $w=xw'z$ with $w'\in\bigcap_i L[M_i]$ is reset for $\mathscr{A}$ and it is not reset for $\mathscr{B}.$ Thus $\overline{h}(w)\in\Syn(\mathscr{C})$ and $\overline{h}(w)\not\in\Syn(\mathscr{D}).$ So $\Syn(\mathscr{C})\neq\Syn(\mathscr{D}).$

Assume now that $\Syn(\mathscr{A})=\Syn(\mathscr{B}).$ We show that every minimal reset word of $\mathscr{C}$ is reset for $\mathscr{D}$ and every minimal reset word of $\mathscr{D}$ is reset for $\mathscr{A}.$ Let $u$ be a minimal reset word of $\mathscr{C}.$ Any word $u\in\{\mu\}^*$ is not in $\Syn(\mathscr{C}),$ since $\mu$ brings each column to its subset. Thus we have $u\in\{\lambda,\mu\}^*\setminus\{\mu\}^*.$ The automaton $\mathscr{C}$ possesses a unique sink state $\zeta_1.$ Hence $\mathscr{C}$ is synchronized to $\zeta_1.$ Furthermore, all the transitions leading to $\zeta_1$ are labeled by $\lambda,$ and $\zeta_1$ is fixed by $\mu$ and $\lambda.$ Thus if $u$ does not end with $\lambda$ then it is not a minimal reset word. We have $u\in\{\lambda,\mu\}^*\lambda.$ Consider the word $w=h(u).$ Since $\overline{h}(w)=u,$ we have that $w$ is a reset word for $\mathscr{A}$ and $\mathscr{B}.$ Hence $u\in\Syn(\mathscr{D}).$ So we obtain that $\Syn(\mathscr{C})\subseteq \Syn(\mathscr{D}).$ The opposite inclusion is verified analogously.
\qed
\end{proof}

Lemma~\ref{ResetCD} gives the desired result on \textbf{PSPACE}-completeness of the problem SYN-EQUALITY restricted to a binary alphabet case.

\begin{proposition}\label{rcPSPACE}
Let $\ell$ be a positive integer number, $L$ an ideal language, and $\mathscr{A}$ a synchronizing DFA for which $L$ serves as the language of reset words. The problem of checking the inequality $rc(L)\leq\ell$ is in \textbf{PSPACE}.
\end{proposition}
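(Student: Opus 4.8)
The plan is to show that the inequality $rc(L)\leq\ell$ can be certified by a nondeterministic polynomial-space procedure, and then invoke Savitch's theorem exactly as in the proof of Theorem~\ref{PSPACE}. The witness is a DFA $\mathscr{B}$ with at most $\ell$ states over the alphabet of $\mathscr{A}$ together with a proof that $\Syn(\mathscr{B})=\Syn(\mathscr{A})=L$. The algorithm first nondeterministically guesses the transition table of such a $\mathscr{B}$; since $\ell$ is part of the input, this table has size polynomial in the input and fits in polynomial space. It then has to verify two things: that $\mathscr{B}$ is synchronizing, and that $\Syn(\mathscr{B})=\Syn(\mathscr{A})$.

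Checking that $\mathscr{B}$ is synchronizing is straightforward in \textbf{PSPACE}: it amounts to an emptiness test for the product/power construction, which again by Savitch can be done nondeterministically by guessing a reset word letter by letter while maintaining the current image $\delta_2(Q_2,\cdot)$, exactly as in Theorem~\ref{PSPACE}. For the equality $\Syn(\mathscr{B})=\Syn(\mathscr{A})$ I would simply call the algorithm underlying Corollary~\ref{SYN-EQUALITY}: SYN-EQUALITY is in \textbf{PSPACE}, and here it is applied to the pair $(\mathscr{A},\mathscr{B})$ whose total size is polynomial in the original input, so the space used stays polynomial. If both checks succeed, the procedure accepts. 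Conversely, if $rc(L)\leq\ell$, then by definition some synchronizing $\mathscr{B}$ with at most $\ell$ states and $\Syn(\mathscr{B})=L$ exists; one may assume its alphabet is that of $\mathscr{A}$ since letters acting identically to others can be merged without changing the language of reset words, so the correct guess is available to the nondeterministic procedure. Hence the procedure accepts some branch if and only if $rc(L)\leq\ell$, and by \textbf{PSPACE}=\textbf{NPSPACE} the problem is in \textbf{PSPACE}.

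The only genuinely delicate point is the alphabet of the witness automaton: a priori an MSA for $L$ could use a different alphabet, but since $\Syn(\mathscr{B})\subseteq\Delta^*$ forces every letter of $\mathscr{B}$'s alphabet to occur in some reset word and hence in $L$, the alphabet of any MSA can be taken to be a subset of (a renaming of letters of) the alphabet of $\mathscr{A}$; this keeps the guessed table polynomial-sized. The rest is bookkeeping: one composes the nondeterministic routines for ``$\mathscr{B}$ is synchronizing'' and for SYN-EQUALITY, reusing space, so that the whole verification runs in space polynomial in $|Q|$, $|\Sigma|$, and $\ell$.
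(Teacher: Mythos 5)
Your proposal takes essentially the same route as the paper: nondeterministically guess a DFA $\mathscr{B}$ with at most $\ell$ states, verify $\Syn(\mathscr{B})=\Syn(\mathscr{A})$ by the \textbf{PSPACE} algorithm behind Corollary~\ref{SYN-EQUALITY}, and conclude via Savitch's theorem; your extra remarks on the witness's alphabet and on checking that $\mathscr{B}$ is synchronizing are correct refinements of the same argument. One small point needs repair: you justify that the guessed transition table is polynomial-sized ``since $\ell$ is part of the input,'' but if $\ell$ is written in binary this fails, since $\ell$ can be exponential in its encoding length. The paper closes this by first observing that $rc(L)$ never exceeds the number of states of $\mathscr{A}$ itself, so when $\ell\geq|Q|$ the answer is trivially ``yes'' and otherwise one may assume $\ell<|Q|$, which bounds the guess polynomially in the actual input size; with that preliminary case distinction added, your argument matches the paper's.
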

\begin{proof}
If $\ell$ is greater or equal to the size of $\mathscr{A},$ then the answer is ``yes'' and there is nothing to prove. Let $\ell$ be less than the size of $\mathscr{A}.$ One may non-deterministically guess a DFA $\mathscr{B}$ with at most $\ell$ states and check the equality $\Syn(\mathscr{B})=\Syn(\mathscr{A})$ within polynomial space.
\qed
\end{proof}

\begin{lemma}
Let $L$ be an ideal language and $\mathscr{A}$ some automaton with $\Syn(\mathscr{A})=L.$ The equalities $rc(L)=1$ and $rc(L)=2$ can be checked in polynomial of the size of $\mathscr{A}$ time.
\end{lemma}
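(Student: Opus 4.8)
The plan is to characterize, in purely structural terms, which ideal languages $L$ satisfy $rc(L)=1$ and which satisfy $rc(L)=2$, and to observe that each characterization can be verified in polynomial time given any DFA $\mathscr{A}$ with $\Syn(\mathscr{A})=L$. For $rc(L)=1$: a one-state synchronizing DFA over $\Delta$ synchronizes by all of $\Delta^*$, and $\varepsilon$ is reset only for the one-state automaton. Hence $rc(L)=1$ if and only if $L=\Delta^*$, which holds if and only if $\varepsilon\in\Syn(\mathscr{A})$, i.e. $\mathscr{A}$ already has a single state (or, more generally, $|Q|=1$ after trimming). This is immediate to check.

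For $rc(L)=2$ the work is to enumerate all two-state synchronizing DFAs over the alphabet $\Delta$ of $\mathscr{A}$ and test, for each, whether its language of reset words equals $L=\Syn(\mathscr{A})$. A two-state DFA $\mathscr{B}=\langle\{p,s\},\Delta,\tau\rangle$ that is synchronizing and not already one-state must have $s$ as a sink, so each letter $a\in\Delta$ either fixes both states (call such $a$ \emph{neutral}), or sends $p$ to $s$ (call such $a$ \emph{resetting}); at least one resetting letter must exist. Up to the naming of the sink, such a $\mathscr{B}$ is determined by the partition of $\Delta$ into neutral letters $N$ and resetting letters $R\neq\emptyset$, and one computes directly that $\Syn(\mathscr{B})=\Delta^*R\Delta^*=N^*R\Delta^*$. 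So $rc(L)=2$ if and only if $rc(L)\neq1$ and there is a nonempty $R\subseteq\Delta$ with $L=\Delta^*R\Delta^*$. The only candidate for $R$ is forced: $R$ must be exactly the set of letters $a\in\Delta$ with $a\in L$ (since $\Delta^*R\Delta^*\cap\Delta=R$), so there is a \emph{unique} candidate automaton $\mathscr{B}$ to test, and $R$ is read off from $\mathscr{A}$ in polynomial time by checking, for each letter $a$, whether $a\in\Syn(\mathscr{A})$ — that is, whether $a$ is a reset word for $\mathscr{A}$, which is a trivial linear-time check.

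It remains to verify the single equality $\Syn(\mathscr{A})=\Syn(\mathscr{B})$ with this candidate $\mathscr{B}$, and the point is that for this specific $\mathscr{B}$ (not for general instances of SYN-EQUALITY) the test is polynomial, because $\Syn(\mathscr{B})=\Delta^*R\Delta^*$ is recognized by a DFA of constant size. Concretely, $\Syn(\mathscr{A})\subseteq\Delta^*R\Delta^*$ iff no reset word of $\mathscr{A}$ avoids $R$, i.e. iff restricting $\mathscr{A}$ to the letters in $N=\Delta\setminus R$ yields a non-synchronizing automaton — checkable in polynomial time by the standard pair-automaton criterion for synchronizability. Conversely $\Delta^*R\Delta^*\subseteq\Syn(\mathscr{A})$ holds iff every word containing some letter of $R$ resets $\mathscr{A}$; since $\Syn(\mathscr{A})$ is an ideal, this is equivalent to each single letter $a\in R$ being a reset word for $\mathscr{A}$, which was already how $R$ was selected, so this inclusion is automatic and need not be rechecked. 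Thus $rc(L)=2$ reduces to: $rc(L)\neq1$, and the automaton $\mathscr{A}$ restricted to $\Delta\setminus R$ (where $R$ is the set of one-letter reset words of $\mathscr{A}$) is non-synchronizing — a polynomial-time condition.

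The only genuinely delicate point is the claim that one may restrict attention to the candidate partition $R=\{a\in\Delta : a\in\Syn(\mathscr{A})\}$ and that no other two-state automaton can work; this follows from the structural normal form above together with the observation that $\Delta^*R\Delta^*\cap\Delta=R$ forces $R$ uniquely, but it should be stated carefully, including the degenerate subcases (a two-state DFA where neither state is a sink can still be synchronizing — e.g. if some letter collapses both states to $p$ — but then it is equivalent, as a synchronizing device, to one in sink-normal form, so the enumeration loses nothing). I expect writing out this normal-form reduction cleanly, rather than any computation, to be the main obstacle.
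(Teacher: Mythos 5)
Your proposal is correct and follows essentially the same route as the paper: both reduce $rc(L)=1$ to $L=\Delta^*$ (equivalently, $\mathscr{A}$ having a single state) and reduce $rc(L)=2$ to computing the set of letters that are reset words for $\mathscr{A}$, deleting the corresponding transitions, and checking that the resulting automaton is no longer synchronizing. Your write-up is somewhat more careful than the paper's about why the candidate two-state automaton is uniquely forced (via $\Delta^*R\Delta^*\cap\Delta=R$) and about two-state synchronizing automata lacking a sink state, points the paper passes over quickly.
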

\begin{proof}
Let $\mathscr{A}=\langle Q,\Sigma,\delta\rangle.$ Denote by $k$ the size of the alphabet $\Sigma.$ It is easy to see that $rc(L)=1$ if and only if $L=\Sigma^*,$ so it is required that $n=1.$

Let us notice that some $2$-state automaton $\mathscr{B}=\langle P,\Sigma,\delta\rangle$ is synchronizing if and only if some letter brings the automaton to a singleton and each letter $a\in\Sigma$ either maps the state set $P$ to a singleton or acts as a permutation on $P.$ So we find the set $\Gamma=\{a\mid a\in \Syn(\mathscr{A})\}$ in time $O(kn)$ and obtain the DFA $\mathscr{A}'=\langle Q,\Sigma\setminus\Gamma,\delta\rangle$ from $\mathscr{A}$ removing the transitions labeled by letters from $\Gamma.$ It remains to check that $\mathscr{A}'$ is not synchronizing. The latter checking can be done in time $O(kn^2)$ \cite{Epp}. We have that $rc(L)=2$ if and only if $rc(L)\neq 1$ and $\mathscr{A}'$ is not synchronizing.
\qed
\end{proof}

We have constructed for each instance of FINITE AUTOMATA INTERSECTION the corresponding automaton $\mathscr{A}$  over the alphabet $\Delta=\Sigma\cup \{x,y,z\}$ in order to prove that SYN-EQUALITY is a \textbf{PSPACE}-complete problem.
We will use that automaton again to prove the following theorem.

\begin{theorem}
Let $J=\Syn(\mathscr{A}).$ $\bigcap^n_{i=1}L[M_i]\neq\emptyset$ if and only if $rc(J)>3.$ 
\end{theorem}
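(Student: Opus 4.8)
The plan is to use the automaton $\mathscr{A}$ together with the already-established facts that $\Syn(\mathscr{B}) = I$ (Lemma~\ref{ResetB}), that $\mathscr{B}$ is an MSA for $I$ so $rc(I) = 3$, and the equivalence of Lemma~\ref{ResetA}: $\bigcap_i L[M_i] = \emptyset$ iff $\Syn(\mathscr{A}) = I$. One direction is immediate: if $\bigcap_i L[M_i] = \emptyset$, then $J = \Syn(\mathscr{A}) = I$, and since $rc(I) = 3$ we get $rc(J) = 3$, hence $rc(J) \not> 3$. So I would spend the bulk of the argument on the contrapositive of the other direction: assuming $\bigcap_i L[M_i] \neq \emptyset$, show $rc(J) > 3$, i.e. no synchronizing DFA with at most three states has language of reset words equal to $J$.

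First I would record what $J$ looks like when $\bigcap_i L[M_i] \neq \emptyset$. From the proof of Lemma~\ref{ResetA} we know $I \subseteq \Syn(\mathscr{A})$ always, and the word $w = xw'z$ with $w' \in \bigcap_i L[M_i]$ lies in $\Syn(\mathscr{A}) \setminus I$; more generally every word of the form $u x t z v$ with $xt \in (\Sigma\cup\{x\})^*$, $\varphi(q_j, t) \in F_j$ for all $j$ (equivalently $t \in \bigcap_i L[M_i]$ after the reset by $x$) and $v \in \Delta^*$ is reset for $\mathscr{A}$. So $J = I \cup I'$ where $I'$ is a nonempty ideal consisting of words that genuinely use the letter $x$ to "reload" the $M_i$-computations. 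The key structural point I want to extract is: $J$ is an ideal strictly containing $I$, and in particular $J$ contains words whose shortest "witness" of synchronization is long — longer than any fixed bound depending only on $3$ and $|\Delta|$. Concretely, I would argue that $J \neq I$ but $J \cap (\Sigma\cup\{x\})^* = \emptyset$ still holds (no word over $\Sigma\cup\{x\}$ resets $\mathscr{A}$, by Lemma~\ref{PhiA}), so $J$ also satisfies $y, z \notin J$-type constraints but $z\Delta^+ \subseteq J$ and $y\Delta^* \subseteq J$, exactly as $I$ does.

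The main obstacle, and the heart of the proof, is the lower bound $rc(J) \geq 4$: I must show no $1$-, $2$-, or $3$-state synchronizing DFA over $\Delta$ has $J$ as its reset language. For sizes $1$ and $2$ this is easy and essentially already in the excerpt: $rc(L) = 1$ forces $L = \Delta^*$, false here since $y \notin J$... wait, $y \in I \subseteq J$; rather one uses that $J \neq \Delta^*$ because, say, $\varepsilon \notin J$ and not every single letter is in $J$ (the letters of $\Sigma$ and $x$ are not reset). For size $2$, a $2$-state synchronizing DFA has $\Syn = \Delta^* U \Delta^*$ for a very restricted $U$ (as in the $rc(L)=2$ lemma), and one checks $J$ cannot have this shape — e.g. $z \notin J$ but $zz \in J$ forces the reset structure to be incompatible with the permutation/constant dichotomy of $2$-state automata. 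The real work is size $3$: I would suppose $\mathscr{C}'$ is a $3$-state synchronizing DFA with $\Syn(\mathscr{C}') = J$ and derive a contradiction. Since $\mathscr{B}$ is the (essentially unique up to the structure forced by) MSA of $I$, and $J \supsetneq I$, I want to show that the extra words of $I'$ force $\mathscr{C}'$ to "remember" more than a $3$-state automaton can: intuitively $\mathscr{C}'$ must, after reading a prefix in $(\Sigma\cup\{x\})^*$, be in a state from which $z$ does not yet synchronize (since such a prefix followed by a single $z$ need not be reset) but from which $z\Delta$ does — and it must do this uniformly, which pins down a $2$-element "active" part behaving like $\{p_1, p_2\}$ of $\mathscr{B}$, leaving only the sink; but then $\mathscr{C}'$ has exactly the transition structure of $\mathscr{B}$ on $y, z$ and the letters of $\Sigma \cup \{x\}$ can only permute or collapse the $2$-element part, giving $\Syn(\mathscr{C}') = I \neq J$, a contradiction. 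I would make this precise by analyzing, for each letter $a \in \Delta$, whether $a$ acts on the non-sink states of $\mathscr{C}'$ as a permutation or a collapse, using that $y$ collapses (since $y \in J$) and that no word over $\Sigma \cup \{x\}$ collapses. This case analysis — showing the $2$-element active part is forced and then that $x$'s action cannot create the words of $I'$ without also creating words not in $J$ — is where essentially all the effort goes; the counting bound ($4 > 3$) itself is trivial once the structural rigidity of $3$-state synchronizing automata with reset language containing $I$ is established.
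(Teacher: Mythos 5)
Your decomposition agrees with the paper's: the direction $\bigcap_i L[M_i]=\emptyset\Rightarrow rc(J)=3$ is immediate from Lemmas~\ref{ResetA} and~\ref{ResetB} exactly as you say, the sizes $1$ and $2$ are dispatched by the earlier criterion (in the paper, via the unique reset letter $y$ and the fact that deleting the $y$-transitions leaves $\mathscr{A}$ synchronizing), and all the substance is in showing that no $3$-state DFA over $\Delta$ has reset language $J$ when $\bigcap_i L[M_i]\neq\emptyset$. That last step is where your proposal has a genuine gap. You claim the hypothetical $3$-state DFA $\mathscr{C}'$ is ``pinned down'' to a sink plus a $2$-element active part carrying exactly the $y,z$-structure of $\mathscr{B}$, and that the letters of $\Sigma\cup\{x\}$ then ``can only permute or collapse'' that part, forcing $\Syn(\mathscr{C}')=I$. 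Neither assertion is justified, and the first is false as a rigidity statement: the only constraints actually available are that $y$ sends the state set to a singleton, say $\{2\}$, and that $z$ sends it to a $2$-element subset (because $z\notin J$ while $z^2\in J$). This leaves three essentially different actions of $z$ (enumerated in the paper's Fig.~4), in two of which the state $2$ is \emph{not} fixed by $z$ --- so $\mathscr{C}'$ need not have a sink state at all, and the active part need not look like $\{p_1,p_2\}$ of $\mathscr{B}$.

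Moreover, even in the $\mathscr{B}$-like case, ``permute or collapse'' does not by itself yield $\Syn(\mathscr{C}')=I$: one still has to determine which actions of $x$, $a$, $b$ are admissible. The paper does this concretely: each of these letters must map $\tau(P,z)$ to a singleton (since $zx,za,zb\in J$) yet cannot make $xx,aa,bb$ reset (since $(\Sigma\cup\{x\})^*\cap J=\emptyset$), which leaves four candidate actions per letter, and one then checks for every admissible combination, in each of the three $z$-cases, that the resulting reset language differs from $J$. Your sketch replaces this finite but unavoidable case analysis with a structural claim that would itself require essentially the same analysis to prove; as written, the step from ``$J\supsetneq I$'' to ``no $3$-state automaton works'' is not established. (Minor points: your description of the extra words omits the component $I_3=\{wz\mid w\in\bigcap_i L[M_i],\ \delta_i(Q_i,w)\subseteq F_i\}$, and the remark about reset words being ``longer than any fixed bound'' plays no role --- a $3$-state automaton can have arbitrarily long minimal reset words.)
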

\begin{proof}

\noindent Let $\bigcap^n_{i=1}L[M_i]=\emptyset.$ In this case we have $\Syn(\mathscr{A})=I.$ As it was mentioned above $\mathscr{B}$ is an MSA for $I,$ so we have the equality $rc(\Syn(\mathscr{A}))=3$. Let us assume now that $\bigcap^n_{i=1}L[M_i]\neq\emptyset.$ Since $y$ is a unique reset letter for $\mathscr{A}$ and the automaton $\mathscr{A}'$ (which is obtained from $\mathscr{A}$ by removing of all transitions labeled by $y$) is still synchronizing, we get that $rc(J)\geq 3.$ Now we verify that $rc(J)\neq 3,$ that is no $3$-state synchronizing automaton is synchronized exactly by $J.$ It is easy to see that if $\bigcap^n_{i=1}L[M_i]\neq\emptyset,$ then $J=I\cup I_3\cup I_4$ where\\
\begin{align*}
I_3&=\{wz\mid w\in\bigcap^n_{i=1}L[M_i] \text{ and }\delta_i(Q_i,w)\subseteq F_i\};\\
I_4&=\{uxwz\mid u\in(\Sigma\cup\{x\})^*,w\in\bigcap^n_{i=1}L[M_i]\}.
\end{align*}

Arguing by contradiction, assume that $\Syn(\mathscr{B})=J$ for some $3$-state automaton $\mathscr{B}$ over $\Delta.$ Denote the state set of $\mathscr{B}$ by $P=\{0,1,2\}$ and the transition function by $\tau.$ Since $y$ is a reset letter for $\mathscr{B},$ we have that $y$ brings $P$ to a singleton, say $\{2\}.$ Letter $z$ is not reset for $\mathscr{A}$ and $z^2\in J,$ hence $z$ maps $P$ to $2$-element subset. It is easy to check that there are just three possible different ways of defining the action of $z$ on the state set $P$ (see Fig.4).

\begin{figure}[th]
\begin{center}
\unitlength=4pt
\begin{picture}(75,20)
    \gasset{Nw=4,Nh=4,Nmr=2,loopdiam=4}
    \thinlines
\node(s)(10,5){$2$}
\node(p_2)(-5,5){$0$}
\node(p_1)(2.5,17){$1$}
\drawloop[loopangle=-45](s){$y,z$}
\drawedge(p_1,s){$y$}
\drawedge[ELside=r](p_2,s){$y,z$}
\drawedge[ELside=r](p_1,p_2){$z$}
\node(2s)(40,5){$2$}
\node(2p_2)(25,5){$0$}
\node(2p_1)(32.5,17){$1$}
\drawloop[loopangle=0](2s){$y$}
\drawloop[loopangle=135](2p_2){$z$}
\drawedge(2p_1,2s){$y$}
\drawedge[ELside=r](2p_2,2s){$y$}
\drawedge[ELside=r](2p_1,2p_2){$z$}
\drawedge[curvedepth=3](2s,2p_1){$z$}
\node(3s)(75,5){$2$}
\node(3p_2)(60,5){$0$}
\node(3p_1)(67.5,17){$1$}
\drawloop[loopangle=0](3s){$y$}
\drawloop[loopangle=180](3p_2){$z$}
\drawedge[ELside=r,curvedepth=-3](3s,3p_2){$z$}
\drawedge(3p_1,3s){$y,z$}
\drawedge[ELside=r](3p_2,3s){$y$}
\end{picture}
\caption{Possible ways of defining the action of $y$ and $z$ in $\mathscr{B}$.}
\end{center}
\label{FigBrc}
\end{figure}
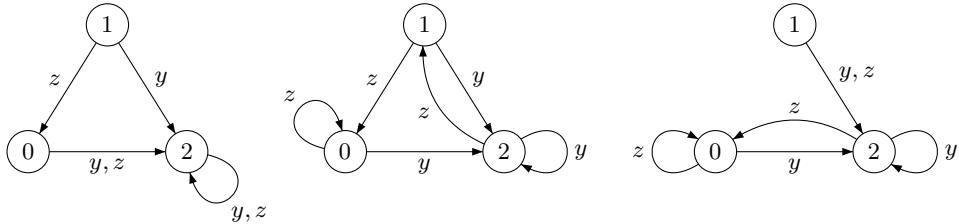

Let us assume as above that $\Sigma=\{a,b\}.$ It remains to define the action of $x,a$ and $b$ on the state set $P.$ One may see that the words $za,zb$ and $zx$ are reset for $\mathscr{A}.$ So letters $x,a$ and $b$ should map the set $\tau(P,z)$ to singletones. However, any word from $(\Sigma\cup\{x\})^*$ is not reset for $\mathscr{A}.$ In particular, $xx,aa,bb\not\in \Syn(\mathscr{A}).$ Consider, for instance, the first automaton in the Fig.4. We have that $\tau(P,z)=\{0,2\},$ thus the action of $x,$ $a$ and $b$ is defined in such way that $|\tau(\{0,2\},x)|=|\tau(\{0,2\},a)|=|\tau(\{0,2\},b)|=1.$ So there are six possible ways of defining the action of $x$ on the states of $\mathscr{B}.$ Since $xx\not\in\Syn(\mathscr{A}),$ we have that the following two ways of defining the transitions under the action of $x$ are impossible:
\begin{align*}
&012\\
x\text{ }&020\\
x\text{ }&202.
\end{align*}
Indeed, in both cases the word $xx$ brings the set $\{0,1,2\}$ to a singletone. So, actually, there are just four possible ways of defining the action of $x$ on the states of $\mathscr{B}.$ The same arguments can be provided for letters $a$ and $b.$ 
Thus the definition of the action of $x,$ $a,$ and $b$ is chosen in one of the following ways:
\begin{align*}
&012\\
x_1\text{ }&010\\
x_2\text{ }&212\\
x_3\text{ }&101\\
x_4\text{ }&121.
\end{align*}
For instance, one may say that $x$ acts on $P$ as $x_1,$ $a$ acts as $x_2$ and $b$ acts as $x_3.$ It is sufficient to consider only those cases where all the letters $x,$ $a$ and $b$ act on $P$ differently. There remains four ways of choosing a triple $\{x_i,x_j,x_k\}$ defining the action of letters $x,$ $a$ and $b.$ It can be easily checked that $J\neq \Syn(\mathscr{B})$ in each case. Analogous arguments are provided for the rest two automata in the Fig.4.
It means that the reset complexity of the language of reset words of the DFA $\mathscr{A}$ is at least $4$, that is $rc(J)\geq 4.$
\qed
\end{proof}

\begin{corollary}
Let $L$ be and ideal language and $\mathscr{A}$ a synchronizing DFA over at least $5$-letter alphabet with $\Syn(\mathscr{A})=L$. The problem of checking the inequality $rc(L)\leq 3$ is \textbf{PSPACE}-complete.
\end{corollary}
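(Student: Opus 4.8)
The plan is to combine the two preceding results in the obvious way. The previous theorem shows that, given an instance of FINITE AUTOMATA INTERSECTION, one can build in polynomial time a synchronizing DFA $\mathscr{A}$ over $\Delta=\Sigma\cup\{x,y,z\}$ such that, setting $J=\Syn(\mathscr{A})$, we have $\bigcap^n_{i=1}L[M_i]=\emptyset$ if and only if $rc(J)=3$, and otherwise $rc(J)\geq 4$. Since $rc(J)\geq 3$ always holds for this $\mathscr{A}$ (as noted in the proof of that theorem, because $y$ is a reset letter and the automaton obtained by deleting the $y$-transitions is still synchronizing), we get $rc(J)\leq 3$ if and only if $rc(J)=3$ if and only if $\bigcap^n_{i=1}L[M_i]=\emptyset$. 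So the map sending an instance of FINITE AUTOMATA INTERSECTION to the instance $(\mathscr{A},3)$ of the inequality-checking problem is a polynomial-time many-one reduction from co-FINITE-AUTOMATA-INTERSECTION; since the latter is \textbf{PSPACE}-complete and \textbf{PSPACE} is closed under complement, hardness follows.

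For membership in \textbf{PSPACE}, I would simply invoke Proposition~\ref{rcPSPACE} with $\ell=3$: one non-deterministically guesses a DFA $\mathscr{B}$ on at most $3$ states over the same alphabet and verifies $\Syn(\mathscr{B})=\Syn(\mathscr{A})$ in polynomial space using Corollary~\ref{SYN-EQUALITY} (SYN-EQUALITY is in \textbf{PSPACE}), and Savitch's theorem removes the nondeterminism. Thus the problem is both in \textbf{PSPACE} and \textbf{PSPACE}-hard, hence \textbf{PSPACE}-complete.

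The only point requiring care — and the step I expect to be the real content rather than routine bookkeeping — is confirming that the alphabet hypothesis ``at least $5$ letters'' is exactly what the reduction produces: the automaton $\mathscr{A}$ of the previous theorem lives over $\Delta$ with $|\Delta|=|\Sigma|+3$, and FINITE AUTOMATA INTERSECTION remains \textbf{PSPACE}-complete already for a binary input alphabet $\Sigma$, giving $|\Delta|=5$. Padding $\Sigma$ (or $\Delta$) with extra letters that act as the identity on all states changes neither $\Syn(\mathscr{A})$ in an essential way nor the value of $rc$, so the result extends to all alphabets of size at least $5$. I would state this padding remark explicitly but not belabor it, since the substantive work — showing $rc(J)=3 \iff$ emptiness of the intersection — has already been done in the theorem above.
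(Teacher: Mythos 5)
Your proposal is correct and matches the paper's intended argument: the corollary is an immediate combination of Proposition~\ref{rcPSPACE} (membership, with $\ell=3$) and the preceding theorem (hardness, via the equivalence $\bigcap_i L[M_i]=\emptyset \iff rc(J)\le 3$ and the \textbf{PSPACE}-completeness of FINITE AUTOMATA INTERSECTION for binary $\Sigma$, which yields $|\Delta|=5$). The padding remark is harmless but not actually needed, since DFAs over exactly five letters are already valid instances of the problem as stated.
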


\textbf{Acknowledgment}. The author is grateful to participants of the seminar ``Theoretical Computer Science''  for valuable comments.


\begin{thebibliography}{99}
\bibitem{AnGuVol}
D.S. Ananichev, V.V. Gusev, M.V. Volkov \emph{Slowly Synchronizing Automata
 and Digraphs.} In: Proc of MFCS 2010. LNCS 6281(010). P. 55--65.


\bibitem{Ce64}
J.~\v{C}ern\'{y}. \emph{Pozn\'{a}mka k homog\'{e}nnym eksperimentom s kone\v{c}n\'{y}mi
automatami.} Mat.-Fyz.\ Cas.\
Slovensk.\ Akad.\ Vied. 1964 V.14. P. 208--216.

\bibitem{Epp}
D. Eppstein. \emph{Reset sequences for monotonic automata.}, SIAM J. Comput. 1990. V.19. P. 500--510.


\bibitem{FinGen}
V. Gusev, M. Maslennikova, E. Pribavkina. \emph{Finitely generated ideal languages
and synchronizing automata.} In: J. Karhum\"{a}ki, L. Zamboni (eds.) Proc. WORDS
2013. LNCS 8079. P. 143--154.

\bibitem{Kozen}
D. Kozen \emph{Lower bounds for natural proof systems.} In: Proc. of the 18th FOGS. 1977 P. 254--266.

\bibitem{Martygin}
P. Martygin \emph{Computational Complexity of Certain Problems Related to Carefuuly Synchronizing Words for Partial Automata and Directing Words for Nondeterministic Automata.} In: F. Ablayev (eds.) Theory Comput. Sci. 2014 V.54. P. 293--304.


\bibitem{SOFSEM}
M.I. Maslennikova \emph{Reset Complexity of Ideal Languages.} 2014. arXiv: 1404.2816 (published in M. Bielikov\'{a} (eds.) Int. Conf. SOFSEM 2012, Proc. V. II, Institute of Computer Science Academy of Sciences of the Czech Republic. 2012 P. 33--44.)


\bibitem{Pin}
J.-E. Pin  \emph{On two combinatorial problems arising from automata theory.} Ann. Discrete Math. 1983. V.17. P. 535--548.

\bibitem{PribRoPSPACE}
E.V. Pribavkina, E. Rodaro \emph{Recognizing synchronizing automata with finitely many minimal synchronizing words is \textbf{PSPACE}-complete.} In: B.L\"{o}we (eds.) Proc. CiE 2011. Lect.\ Notes Comp.\ Sci., Springer-Verlag, Berlin Heidelberg. 2011 V.6735, P. 230--238.

\bibitem{PribR1}
E. Pribavkina, E. Rodaro. \emph{Synchronizing automata with finitely
 many minimal synchronizing words.} Inf. and Comput. 2011.
 V.209(3). P. 568--579.



\bibitem{Sa05}
S. Sandberg. \emph{Homing and synchronizing sequences.}//
In: M.~Broy et~al (eds.) Model-Based Testing of Reactive Systems, Lect.\ Notes Comput.\ Sci, Springer-Verlag, Berlin-Heidelberg-New York. 2005. V.3472. P.5--33.

\bibitem{Savitch}
W.J. Savitch. \emph{Relationships between nondeterministic and deterministic tape classes.} J.CSS. 1970. V.4. P. 177--192.

\bibitem{Vo_Survey}
M.\,V.~Volkov. \emph{Synchronizing automata and the \v{C}ern\'y
conjecture.} In: C.\,Mart\'\i{}n-Vide, F.\,Otto, H.\,Fernau (eds.), Languages
and Automata: Theory and Applications. LATA 2008. Lect.\ Notes Comp.\ Sci., Berlin, Springer. 2008. V.5196. P.11--27.

\end{thebibliography}
\end{document}